\begin{document}
\newcommand{\cl}[3]{\mbox{$#1,#2 \to #3$}}
\newcommand{\h}{h}
\newcommand{\p}{p}
\newcommand{\pcn}{pcn}
\newcommand{\g}{g}

\newcommand{\Heads}{\mathrm{Heads}}

\newcommand{\edge}[2]{\mbox{$(#1,#2)$}}
\newcommand{\arc}[2]{\mbox{$(#1,#2)$}}

\newtheorem{theorem}{Theorem}[section]
\newtheorem{lemma}[theorem]{Lemma}
\newtheorem{claim}{Claim}
\newtheorem{corollary}[theorem]{Corollary}
\newtheorem{proposition}[theorem]{Proposition}
\newtheorem{conjecture}[theorem]{Conjecture}
\newtheorem{problem}[theorem]{Problem}
\newtheorem{definition}[theorem]{Definition}
\newtheorem{remark}[theorem]{Remark}

\newtheorem{example}[theorem]{Example}


\title{Hydras: Directed Hypergraphs and Horn Formulas}
\author{Robert H. Sloan$^*$} \email{sloan@cs.uic.edu}\address{$^*$University of Illinois at Chicago}
\author{Despina Stasi$^{\dagger}$} \email{despina.stasi@gmail.com}\address{$^\dagger$University of Cyprus and Illinois Institute of Technology}
\author{Gy\"{o}rgy Tur\'{a}n$^{*\,\ddagger}$} \email{gyt@uic.edu}\address{$^\ddagger$MTA-SZTE Research Group on Artificial Intelligence, University of Szeged}

\begin{abstract}
We introduce a new graph parameter, the \emph{hydra number}, arising from the minimization problem for Horn formulas in propositional logic. The hydra number of a graph $G=(V,E)$ is the minimal number of hyperarcs of the form $\cl{u}{v}{w}$ required in a directed hypergraph $H=(V,F)$, such that for every pair $(u, v)$, the set of vertices reachable in $H$ from $\{u, v\}$ is the entire vertex set $V$ if $(u, v) \in E$, and it is $\{u, v\}$ otherwise. Here reachability is defined by forward chaining, a standard marking algorithm.

Various bounds are given for the hydra number. We show that the hydra number of a graph can be upper bounded by the number of edges plus the path cover number of the line graph of a spanning subgraph, which is a sharp bound in several cases. On the other hand, we construct single-headed graphs for which that bound is off by a constant factor.
Furthermore, we characterize trees with low hydra number, and give a lower bound for the hydra number of trees based on the number of vertices that are leaves in
the tree obtained from $T$ by deleting its leaves. This bound is sharp for some families of trees. We give bounds for the hydra number of complete binary trees and also discuss a related minimization problem.
\end{abstract}

\maketitle

\section{Introduction}

We consider a problem concerning the minimal number of hyperarcs in directed hypergraphs with prescribed reachability properties. In this paper a directed hypergraph $H=(V,F)$ has size-3 hyperarcs of the form $\cl{u}{v}{w}$ where $u,v$ is called the \emph{body} (or tail) and $w$ is called the \emph{head} of the hyperarc. Reachability is defined by a marking procedure known as \emph{forward chaining}. A vertex $w \in V$ is \emph{reachable} from a set $S \subseteq V$ if the following process marks $w$: start by marking vertices in $S$, and as long as there is a hyperarc $\cl{a}{b}{c}$ such that $a$ and $b$ are both marked and $c$ is unmarked, mark $c$ as well.

Given an undirected \emph{graph} $G=(V,E)$, we would like to find the minimal number of hyperarcs in a directed hypergraph $H=(V,F)$, such that for every pair $(u, v)$, the set of vertices reachable from $\{u, v\}$ in $H$ is the whole vertex set $V$ if $(u, v) \in E$, and is $\{u,v\}$ otherwise.
In other words, given a set of hypergraph bodies (specified by a graph), we look for the minimal total number of heads assigned to these bodies such that every body can reach every vertex. We call the minimum the \emph{hydra number}
\footnote{We remind the reader that in Greek mythology the Lernaean Hydra is a beast possessing many heads.}
\footnote{Portions of this work have appeared in \cite{HydrasWG2012} and the second author's doctoral dissertation~\cite{StasiThesis}.}
 of $G$, denoted by $h(G)$.

\subsection{Motivation from propositional logic}
The problem is a combinatorial reformulation of a special case of the \emph{minimization problem for propositional Horn formulas}. Horn formulas are a basic knowledge representation formalism. Horn minimization is the problem of finding a shortest possible Horn formula equivalent to a given formula. There are approximation algorithms and  computational hardness and inapproximability results for this problem \cite{AusielloDaSa1986,BhattacharyaDaMuTu2010,BorosGr2014,HammerKo1993}. Special cases correspond to the well studied transitive reduction and minimum equivalent digraph problems for directed graphs. Estimating the size of a minimal formula is not well understood even in rather simple cases. A \emph{hydra formula} $\varphi$ is a definite Horn formula with clauses of size 3 such that every body occurring in the formula occurs with all possible heads. The minimal number of clauses needed to represent $\varphi$ equals the hydra number of the undirected graph $G$ corresponding to the bodies in $\varphi$.

Besides being a natural subproblem of Horn minimization, the hydra minimization problem is also of interest for the following reason. The Horn \emph{body minimization} problem is the problem of finding, given a definite Horn formula, an equivalent Horn formula with the minimal number of distinct bodies. There are efficient algorithms for this problem \cite{AngluinFrPi1992,AriasBa2011,GuiguesDu1986,Maier1983}. Thus one possible approach to Horn minimization is to find an equivalent formula with the minimal number of bodies and then to select as few heads as possible from the set of heads assigned to the bodies. This approach
is indeed used in an approximate Horn minimization algorithm \cite{BhattacharyaDaMuTu2010}. Hydras are a natural test case for this approach.

The hydra property in Horn formulas is also of interest to relational database theory. It
 corresponds to the Boyce-Codd normal form, where every determinant set of attributes is a candidate key, that is, it either determines all attributes, or only trivial ones, i.e., attributes that are subsets of the determinant.

\subsection{Summary of results}
The hydra number of a graph is defined in Section~\ref{sect:prel}, where we also explain the Horn formula minimization problem and show, in Proposition~\ref{pro:impl},
that the two minimization problems are indeed equivalent. For the rest of the paper we use the hypergraph terminology. In Section~\ref{sect:hydras}, we begin by observing that the hydra number of a graph is at least the number of edges and at most twice the number of edges, for every graph on at least three vertices. This determines the ``playing field'' for the hydra number.
It is useful to consider the \emph{excess number} of a graph, which is the difference of its hydra number and its number of edges.
Graphs satisfying the lower bound, i.e., graphs with excess 0, are called \emph{single-headed}, and deserve further study as the best behaving graphs for the hydra number.
It turns out that single-headedness is related to, but is more general than Hamiltonicity.
 We also note that adding edges to a graph can only mildly increase its hydra number (Proposition~\ref{prop:SpanningSubgraph}) and we give some simple sufficient and some simple necessary conditions for single-headedness that we will apply later in the paper.

Theorem~\ref{thm:HydraUpperBound} in Section~\ref{sect:LineGraphs} gives a general upper bound: it shows that the hydra number is at most
the number of edges plus its \emph{generalized line graph path cover number} $p(G)$, the path cover number of the line graph of any spanning subgraph with no isolated vertices.
We give some bounds on the hydra number of disconnected graphs,
and discuss the connection to the total interval number of graphs. This connection is used to determine the maximal hydra number of a tree and the maximum excess number of a connected graph.
Section \ref{sect:LargePathCover} shows that the bound of Theorem~\ref{thm:HydraUpperBound} can be far from optimal.
In Theorem~\ref{thm:largePathCover} we construct a family of single-headed graphs with a linear number of edges and show that none of its subgraphs has a Hamiltonian line-graph and that, in fact, the quantity $p(G)$ is also linear.

We then focus our attention on the hydra number of trees in Sections~\ref{sect:Trees} and \ref{sect:specialTrees}. We show that single-headed trees are precisely the stars and that trees with excess one are precisely the non-star caterpillars (Theorem~\ref{thm:caterpillars}).
We also prove a general lower bound for trees (Theorem~\ref{thm:LowerBoundTrees}) by showing that the hydra number of a tree $T$ is at least its number of edges plus
$\left\lceil{\ell(T)}/{2}\right\rceil$, where $\ell(T)$ is the number of leaf vertices in the tree obtained from $T$ by deleting its leaves.
The statement generalizes to graphs with certain types of pendant trees. This lower bound is sharp and is attained by \emph{spider} trees (Corollary~\ref{cor:SpiderTreeHydraNumber}). In fact, spiders with an odd number of legs of length exactly two, also attain the upper bound of Theorem~\ref{thm:HydraUpperBound}. We conclude the discussion of spiders by considering the in-degree of vertices in optimal hydras representing a spider family (Theorem~\ref{thm:SpiderTreeInDegree}). We end the section by considering binary trees: we show that the hydra number of a complete binary tree is between $\frac{9}{8}\left|E\left(T\right)\right|$ and $\frac{17}{15}|E\left(T\right)|+1$ (Theorem~\ref{thm:BinaryTreeBounds}), and demonstrate that a connectivity restriction on the spanning subgraph in Theorem~\ref{thm:HydraUpperBound} can significantly worsen the upper bound on the hydra number even for trees (Theorem~\ref{thm:BinaryTreePaths}).

Finally, in Section~\ref{sect:CompleteGraphs} we consider the related problem of finding the minimal number of hyperarcs for which every $k$-tuple of vertices reaches every
vertex, and we give almost matching lower and upper bounds. 

\medskip
Ku\v{c}era in recent work~\cite{KuceraISAIM2014} studied the complexity of deciding single-headedness, and proved that this problem, and by consequence the problem of calculating the hydra number of a graph, is $\mathrm{NP}$-complete. He also introduced a subclass of trees for which the hydra number can be computed in polynomial time.

\section{The hydra number} \label{sect:prel}

In this section we discuss the two approaches for arriving at the notion of the hydra number and their relationship.

\subsection{The directed hypergraph viewpoint}

The \emph{closure} $cl_{H}(S)$ of a set of vertices $S$ with respect to directed hypergraph $H$ is the set of vertices marked by the forward chaining
procedure (described in the first paragraph of the paper) started from $S$. 

\begin{definition}
A directed 3-hypergraph $H = (V, F)$ \emph{represents} an undirected graph $G = (V, E)$ if
\begin{enumerate}[i.]
\item $\edge{u}{v} \in E$ implies $cl_H(u,v) = V$,
\item $\edge{u}{v}\not\in E$ implies $cl_H(u,v) = \{u, v\}$.
\end{enumerate}
\end{definition}

\begin{definition}
The \emph{hydra number} $\h(G)$ of an undirected graph $G = (V, E)$ is
\[ \min \{|F| \, : \, H = (V, F) \, \textrm{represents} \,\, G\}. \]
\end{definition}

These notions are illustrated by the following example.
General bounds for the hydra number of complete binary trees are given in Section~\ref{sect:specialTrees}.

\begin{example}
Consider $B_2$, the complete binary tree of depth 2. In Figure~\ref{fig:B2hydra} we see an example of a directed 3-hypergraph representing $B_2$. The bodies of the hyperarcs are the edges of $B_2$ and the heads of the hyperarcs of $H$ are pointed to by the arrows. Proposition~\ref{prop:bridge} shows that this is optimal, and thus $\h(B_2)=7$.
    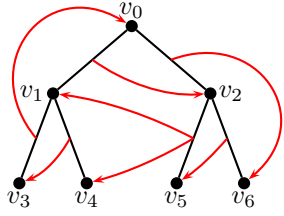
\begin{figure}[h]
    \begin{center}
    \begin{pspicture}(0,10)(100,100)
    \psset{labelsep=1.1}
    \cnode*(45,80){2.3pt}{v0}\nput{90}{v0}{$v_0$}
    \cnode*(10,50){2.3pt}{v1}\nput{180}{v1}{$v_1$}
    \cnode*(80,50){2.3pt}{v2}\nput{0}{v2}{$v_2$}
    \cnode*(-5,10){2.3pt}{v3}\nput{270}{v3}{$v_3$}
    \cnode*(25,10){2.3pt}{v4}\nput{270}{v4}{$v_4$}
    \cnode*(65,10){2.3pt}{v5}\nput{270}{v5}{$v_5$}
    \cnode*(95,10){2.3pt}{v6}\nput{270}{v6}{$v_6$}
    \ncline{v0}{v1}
    \ncline{v0}{v2}
    \ncline{v1}{v3}
    \ncline{v1}{v4}
    \ncline{v2}{v5}
    \ncline{v2}{v6}
    \pnode(27.5,65){v01}
    \pnode(62.5,65){v02}
    \pnode(2.5,30){v13}
    \pnode(17.5,30){v14}
    \pnode(72.5,30){v25}
    \pnode(87.5,30){v26}
    \ncarc[arcangle=20,linecolor=red]{->}{v14}{v3}
    \ncarc[arcangle=80,ncurv=1.3,linecolor=red]{->}{v13}{v0}
    \ncarc[arcangle=-20,linecolor=red]{->}{v01}{v2}
    \ncarc[arcangle=80,ncurv=1.3,linecolor=red]{->}{v02}{v6}
    \ncarc[linecolor=red]{->}{v26}{v5}
    \ncarc[arcangle=-10,linecolor=red]{->}{v25}{v1}
    \ncarc[linecolor=red]{->}{v25}{v4}
    \end{pspicture}
    \end{center}
    \label{fig:B2hydra}
    \caption{$H$: a directed hypergraph representing $B_2$. }
    \end{figure}
\end{example}

\begin{remark}
The removal of an isolated vertex decreases its hydra number by one.For the rest of the paper we assume that graphs contain \emph{no isolated vertices.}
\end{remark}

\subsection{The Horn formula viewpoint}

A \emph{definite Horn clause} in propositional logic is a disjunction of literals where exactly one literal is unnegated. Such a disjunction can also be viewed as an implication; for example the clause $\bar{x} \vee \bar{y} \vee z$ is equivalent to the implication $\cl{x}{y}{z}.$ The tuple $x, y$ is the \emph{body} and the variable $z$ is the \emph{head} of the clause. The size of a clause is the number of its literals. A definite $d$-Horn formula is a conjunction of definite Horn clauses of size $d$. A clause $C$ is an implicate of a formula $\varphi$ if every truth assignment
satisfying $\varphi$ satisfies $C$ as well. The implicate $C$ is a prime implicate if none of its proper subclauses is an implicate.

Implication between a definite Horn formula $\varphi$ and a definite Horn clause $C$ can be decided by the forward chaining procedure. Reformulated for propositional logic, it marks every variable in the body of $C$, and while there is a clause in $\varphi$ with all its body variables marked and its head unmarked, it marks the head of that clause as well. Then $\varphi$ implies $C$ iff the head of $C$ gets marked.

\begin{definition}
A definite 3-Horn formula $\varphi$ is a \emph{hydra} formula, or a hydra, if for every clause $\cl{x}{y}{z}$ in $\varphi$ and every variable $u$, the clause $\cl{x}{y}{u}$ also belongs to $\varphi$.
\end{definition}

For example, the formula
\[(\cl{x}{y}{z}) \wedge (\cl{x}{y}{u}) \wedge (\cl{x}{z}{y}) \wedge (\cl{x}{z}{u})\]
is a hydra formula. Redundant clauses like $\cl{x}{y}{x}$ are omitted for simplicity.

\begin{definition}
The \emph{Horn formula minimization problem for hydras} or \emph{hydra minimization problem} is the following: given a hydra formula $\varphi$, find an equivalent Horn formula with a minimal number of clauses.
\end{definition}

The following proposition is central to this study as it establishes that every \emph{prime implicate} of a hydra is a clause occurring in the hydra itself, a fact that does not hold for definite 3-Horn formulas in general. Thus minimization for hydras amounts to selecting a minimal number of clauses from the hydra that are equivalent to the original formula.

\begin{proposition} \label{pro:impl}
Every prime implicate of a hydra formula $\varphi$ is a clause of $\varphi$.
\end{proposition}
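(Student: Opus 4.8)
The plan is to argue by contradiction: suppose $C = (\cl{a}{b}{c})$ is a prime implicate of $\varphi$ that is not a clause of $\varphi$, and derive that some proper subclause of $C$ is also an implicate, contradicting primality. Since $\varphi$ is definite Horn and $C$ is a definite Horn prime implicate, forward chaining from the body $\{a,b\}$ must mark $c$. First I would look at the very first step of this forward-chaining derivation: there is a clause $(\cl{x}{y}{z}) \in \varphi$ whose body $\{x,y\}$ is already marked, i.e. $\{x,y\} \subseteq \{a,b\}$. Because $\varphi$ is a hydra, once $(\cl{x}{y}{z})$ is present with body $\{x,y\}$, \emph{every} clause $(\cl{x}{y}{u})$ is present, so from $\{x,y\}$ forward chaining marks all of $V$ (this uses the hydra structure, which forces the body $\{x,y\}$ to reach everything).

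The key case analysis is on the size of $\{x,y\} \subseteq \{a,b\}$. If $\{x,y\} = \{a,b\}$ (as multisets, modulo the convention that clauses with a repeated literal are trivial), then in particular $(\cl{a}{b}{c}) = (\cl{x}{y}{c})$ is itself a clause of $\varphi$ by the hydra property, contradicting the assumption that $C \notin \varphi$. Otherwise $\{x,y\}$ is a strict singleton inside $\{a,b\}$ — say $x = y = a$, meaning the clause is really a size-2 or degenerate situation — but $\varphi$ is a definite \emph{3}-Horn formula, so every clause has a genuine two-element body; thus the only way the body $\{x,y\}$ can be marked at the first step with nothing but $\{a,b\}$ marked is $\{x,y\} = \{a,b\}$. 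Hence the first productive clause already has body exactly $\{a,b\}$, the hydra property hands us $(\cl{a}{b}{c}) \in \varphi$, and we are done. I should double-check the boundary convention here: the paper omits "redundant" clauses such as $\cl{x}{y}{x}$, so a hydra's clauses all have two distinct body variables, and I would state this explicitly at the start of the proof.

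A cleaner way to package the same idea, which I would actually write up, is: let $C=(\cl{a}{b}{c})$ be any implicate of $\varphi$ with $c \notin \{a,b\}$. Run forward chaining from $\{a,b\}$. If $c$ gets marked, consider whether any clause of $\varphi$ with body $\subseteq\{a,b\}$ fires. Either no such clause fires — then $\mathrm{cl}_\varphi(\{a,b\}) = \{a,b\}$ and $c$ is not marked, contradiction — or some clause $(\cl{a}{b}{z}) \in \varphi$ fires (body must be exactly $\{a,b\}$ since bodies have size two), and then by the hydra property $(\cl{a}{b}{c}) \in \varphi$. So every implicate of $\varphi$ of the form $(\cl{a}{b}{c})$ with $c\notin\{a,b\}$ is literally a clause of $\varphi$; since every prime implicate of a definite Horn formula is a definite Horn clause of this shape (with a nontrivial body — a singleton-bodied or unit prime implicate would require $\varphi$ to entail something from fewer hypotheses, which forward chaining again rules out unless it is trivial), the conclusion follows.

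The main obstacle, such as it is, is bookkeeping about degenerate clauses rather than anything deep: I need to be careful that "prime implicate" could a priori be a clause with a one-variable body or even a single positive literal, and rule those out using the definite-3-Horn structure and the no-isolated-vertex/trivial-clause conventions. Once that is pinned down, the hydra property does all the real work in a single step, since it is precisely the statement that a body present in $\varphi$ comes with all its heads.
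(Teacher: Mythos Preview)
Your approach is essentially the same as the paper's: look at the first clause of $\varphi$ that fires in forward chaining from the body of $C$, and use the hydra property to conclude that $(\cl{x}{y}{\mathrm{head}(C)}) \in \varphi$ for that body $\{x,y\}$. The execution, however, has a gap.

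You assume from the outset that a prime implicate has the shape $C = (\cl{a}{b}{c})$ with a two-element body, and your bookkeeping paragraph only worries about ruling out bodies of size at most one. But nothing you have said excludes a prime implicate whose body has size three or more. The cited fact that prime implicates of definite Horn formulas are definite Horn clauses constrains the number of positive literals, not the body size. The paper handles this case explicitly: if the body of $C$ has size at least three, then (since forward chaining must fire some clause) the body contains some pair $\{x,y\}$ that occurs as a body in $\varphi$; by the hydra property $(\cl{x}{y}{\mathrm{head}(C)}) \in \varphi$ is then a \emph{proper} subclause of $C$ that is an implicate, so $C$ is not prime. Your own argument adapts immediately to give this---just run your ``first firing clause'' step with $\{a_1,\ldots,a_m\}$ in place of $\{a,b\}$ and note that $\{x,y\}$ is now a strict subset---but as written you never address it, so the case analysis is incomplete.
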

\begin{proof}
First note that all prime implicates of a definite Horn formula are definite Horn clauses~\cite{HammerKo1992}. Consider a hydra $\varphi$ and a definite Horn clause $C$. If the body of $C$ is of size 1, or it is of size 2 but it does not occur as a body in $\varphi$ then forward chaining cannot mark any further variables, thus $C$ cannot be an implicate. If the body of $C$ has size at least 3 then it
must contain a body $x, y$ occurring in $\varphi$, otherwise, again, forward chaining cannot mark any further variables. But then the clause $\cl{x}{y}{head(C)}$ occurs in $\varphi$ and so $C$ is not prime.
\end{proof}

Proposition~\ref{pro:impl} implies that \emph{the minimal formula size of a hydra $\varphi$ and the hydra number of the undirected graph $G$ formed by the bodies in $\varphi$ are the same.}

\begin{remark}
For the remainder of the paper we use hypergraph terminology.
\end{remark}

\section{First bounds on the hydra number and single-headedness}\label{sect:hydras}

In this section, starting with an easy bound for the hydra number of any graph with no isolated vertices, we define a class of graphs that provide the best behaving hydras, and give some preliminary results about their structure.

\begin{proposition}\label{prop:easyBounds}For every graph $G = (V, E)$ with at least three vertices
\[|E(G)|\leq h(G) \leq 2|E(G)|.\] \end{proposition}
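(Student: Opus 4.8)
The lower bound is the easy direction: I would argue that every edge forces a distinct hyperarc. Suppose $H=(V,F)$ represents $G$, and let $\edge{u}{v}\in E$. Since $cl_H(u,v)=V$ and $|V|\ge 3$, the forward chaining procedure started from $\{u,v\}$ must mark at least one vertex outside $\{u,v\}$; the first such vertex is marked by a hyperarc whose body is exactly $\{u,v\}$. Hence for every edge there is at least one hyperarc whose body is that edge, and since distinct edges are distinct bodies these hyperarcs are pairwise distinct, giving $|F|\ge |E(G)|$. I should also remark in passing that a hyperarc whose body is a non-edge is useless (and in fact forbidden, since such a body must have closure $\{u,v\}$), so we may assume every body of $H$ is an edge of $G$; this observation is what makes the counting clean.

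For the upper bound I would exhibit an explicit hypergraph $H$ with $2|E(G)|$ hyperarcs that represents $G$. The construction: fix a spanning connected structure to move the ``mark'' around. Concretely, since $G$ has no isolated vertices, pick for the moment a connected component; within a component with vertex set $C$ and a spanning tree (or just a connected spanning subgraph), the idea is that from any edge $\edge{u}{v}$ we want to reach all of $V$. The clean way: fix an arbitrary edge orientation / ordering and, for each edge $\edge{a}{b}\in E$, include both hyperarcs that let a token at one endpoint together with any already-marked vertex propagate. The standard trick is: fix a Hamiltonian-like traversal is not available in general, so instead use the following. For each edge $e=\edge{a}{b}$, and thinking of the line graph being connected when $G$ is connected, one shows $h(G)\le |E|+p(G)$ by Theorem~\ref{thm:HydraUpperBound}, and trivially $p(G)\le |E|$, which already gives the bound for connected $G$. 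For the fully elementary self-contained argument I would instead directly build $H$: root a spanning tree, list the edges $e_1,\dots,e_m$ so that each $e_{i}$ shares an endpoint with some earlier $e_j$ (possible by connectivity; handle several components by also linking components—but with isolated-vertex-free components of size $\ge 2$ each we can afford it within the $2|E|$ budget), put hyperarc $\cl{a_i}{b_i}{x}$ for the ``new'' endpoint and one more hyperarc per edge to seed neighboring edges, and check forward chaining from any edge marks everything.

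The main obstacle is making the upper-bound construction both correct and within the exact bound $2|E|$ while covering disconnected graphs and the small-case bookkeeping (the reason $|V|\ge 3$ is needed: a single edge $K_2$ has $h=2>2\cdot 1$ would be the wrong count—wait, $h(K_2)$: from $\edge{u}{v}$ we need $cl=V=\{u,v\}$ which is automatic, so actually the issue is graphs like a single edge plus the requirement and the constant). I expect the cleanest route in the paper is: invoke Theorem~\ref{thm:HydraUpperBound} together with the trivial estimate that the path cover number of any graph on $m$ edges is at most $m$ (a line graph on $|E|$ vertices has a path cover of size at most $|E|$), yielding $h(G)\le |E|+|E|=2|E|$ for graphs whose components all have $\ge 2$ vertices; the hypothesis $|V(G)|\ge 3$ rules out the degenerate $G=K_2$ where $h(K_2)=2=2|E|$ actually still holds, so the real role of $|V|\ge3$ is in the lower-bound argument, where we needed a third vertex to force the body-$\{u,v\}$ hyperarc. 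So the write-up order is: (1) observe bodies must be edges; (2) lower bound by the first-marked-vertex argument, using $|V|\ge 3$; (3) upper bound via Theorem~\ref{thm:HydraUpperBound} and $p(G)\le|E(G)|$.
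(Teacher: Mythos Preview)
Your lower bound argument is correct and matches the paper's: every edge must occur as the body of at least one hyperarc, since otherwise forward chaining from that edge cannot mark a third vertex (here is where $|V|\ge 3$ is used), and bodies that are non-edges are forbidden by condition~(ii).

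For the upper bound, however, you are working much harder than necessary, and your final proposal---to invoke Theorem~\ref{thm:HydraUpperBound} together with the trivial bound $p(G)\le |E(G)|$---is logically valid but is a forward reference: Proposition~\ref{prop:easyBounds} precedes Theorem~\ref{thm:HydraUpperBound} in the paper, and the point of the proposition is to establish the basic bounds before developing any machinery. The paper's actual construction is a one-liner that you circled around but did not land on: take \emph{any} cyclic ordering $e_1,\ldots,e_m$ of the edges of $G$ (no adjacency or connectivity required), and for each $e_i$ add the two hyperarcs with body $e_i$ and heads the two endpoints of $e_{i+1}$ (indices mod $m$). From any edge $e_i$, forward chaining marks both endpoints of $e_{i+1}$, hence fires the hyperarcs with body $e_{i+1}$, and so on around the cycle, marking all vertices since there are no isolated vertices. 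This gives exactly $2|E(G)|$ hyperarcs and works uniformly for disconnected graphs---there is no need for spanning trees, line-graph connectivity, or any of the scaffolding you were reaching for. Your instinct that ``consecutive edges should share an endpoint'' is exactly what made you miss this: by spending two heads per edge, you buy the freedom to jump to an arbitrary next edge.
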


\begin{proof}
For the upper bound construct a hypergraph of size $2|E(G)|$ by first ordering the edges of $G$, and then using each edge as the body of two hyperarcs whose heads are the two endpoints of the next edge in $G$. For the lower bound, note that each edge of $G$ must be a body of at least one hyperarc.
\end{proof}

Equality holds in the upper bound if and only if $G$ is a matching. Indeed, if $G$ is a matching then every edge must occur as the body of at least two hyperarcs as otherwise forward chaining cannot
mark any further vertices. If $G$ not a matching, it must contain a path of length two. Let $\edge{u}{v}$ and $\edge{v}{w}$ be the two edges of the path. Order the edges of $G$ so that $\edge{u}{v}$ and $\edge{v}{w}$ appear consecutively. We can modify the proof above noting that $\edge{u}{v}$ only needs the single head $w$. This gives a hypergraph representing $G$ with $2|E(G)|-1$ hyperarcs.

In view of the lower bound one can consider the \emph{excess} of a graph to be $h(G) - |E(G)|$. Graphs of excess 0 are of particular interest as they represent ``most compressible'' hydras.

\begin{definition}
A graph $G = (V,E)$ is \emph{single-headed} if \mbox{$h(G) = |E(G)|$.}
\end{definition}

In other words, a graph is single-headed iff there is a hypergraph $H = (V, F)$ such that every edge of $G$ has \emph{exactly} one head assigned to it, every hyperarc body in $H$ is an edge of $G$ and every pair forming an edge of $G$ has the entire vertex set as its closure. Cycles, for example, are single-headed, as shown by the directed hypergraph
\begin{equation} \label{eq:cycfor}
 (\cl{v_1}{v_2}{v_3}),(\cl{v_2}{v_3}{v_4}), \ldots, (\cl{v_{k-1}}{v_k}{v_1}).
\end{equation}

Adding edges to a cycle preserves single-headedness. For example, the graph obtained by adding edge $(v_i, v_j)$ is represented by the directed hypergraph obtained from (\ref{eq:cycfor}) by adding the hyperarc $\cl{v_i}{v_j}{v_{i+1}}$, where $i + 1$ is meant modulo $m$. Thus we obtain the following proposition. We will discuss stronger forms of this statement in the next section.

\begin{proposition} \label{prop:hamsin}
Hamiltonian graphs are single-headed.
\end{proposition}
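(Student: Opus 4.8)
The plan is to reduce the claim directly to the single-headedness of cycles, which the excerpt has already essentially verified via the explicit hypergraph in~\eqref{eq:cycfor}, together with the observation that extra edges can be absorbed at no cost. Concretely, let $G=(V,E)$ be a Hamiltonian graph on vertices $v_1,\dots,v_k$ with $v_1 v_2 \cdots v_k v_1$ a Hamiltonian cycle. I would first construct the hypergraph $F$ consisting of the $k$ ``cycle'' hyperarcs $\cl{v_i}{v_{i+1}}{v_{i+2}}$ (indices mod $k$), and then, for each edge $e=\edge{v_i}{v_j}\in E$ that is not an edge of the cycle, add the single hyperarc $\cl{v_i}{v_j}{v_{i+1}}$. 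Each edge of $G$ is then the body of exactly one hyperarc, so $|F| = |E(G)|$, and it remains only to check that $F$ represents $G$.

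The verification splits into the two conditions of the definition of ``represents.'' For condition~(i): if $\edge{u}{v}\in E$, I need $cl_F(u,v)=V$. If $(u,v)$ is a cycle edge, say $u=v_i,v=v_{i+1}$, then forward chaining marks $v_{i+2}$ via $\cl{v_i}{v_{i+1}}{v_{i+2}}$, then $v_{i+3}$ via $\cl{v_{i+1}}{v_{i+2}}{v_{i+3}}$, and so on around the cycle, so all of $V$ gets marked. If $(u,v)=\edge{v_i}{v_j}$ is a non-cycle (chord) edge, the added hyperarc $\cl{v_i}{v_j}{v_{i+1}}$ first marks $v_{i+1}$, at which point $v_i,v_{i+1}$ are both marked and the previous argument takes over to mark all of $V$. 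Condition~(ii) requires that for any non-edge pair $(u,v)$, $cl_F(u,v)=\{u,v\}$: no hyperarc has a body contained in $\{u,v\}$, since every hyperarc body is by construction an edge of $G$, so forward chaining from $\{u,v\}$ is immediately stuck and marks nothing new.

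Since $|F|=|E(G)|$ and $F$ represents $G$, we get $h(G)\le |E(G)|$, and combined with the lower bound $h(G)\ge |E(G)|$ from Proposition~\ref{prop:easyBounds}, equality holds, i.e. $G$ is single-headed. I do not anticipate a genuine obstacle here — the content is entirely contained in the displayed hypergraph~\eqref{eq:cycfor} and the chord-absorption trick already spelled out in the paragraph preceding the proposition. The only mild care needed is the bookkeeping of indices modulo $k$ and the observation that the chaining ``wraps around'' correctly to cover the two vertices that come before $v_i$ in cyclic order; one might prefer to phrase this as: once any two cyclically consecutive vertices are marked, an easy induction marks the next one, hence all. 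A borderline case worth a sentence is $k=3$ (the triangle), where the construction degenerates slightly but still works.
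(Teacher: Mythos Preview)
Your proposal is correct and follows exactly the approach the paper takes: the paper constructs the hypergraph~\eqref{eq:cycfor} for the Hamiltonian cycle and then absorbs each chord $\edge{v_i}{v_j}$ with the single hyperarc $\cl{v_i}{v_j}{v_{i+1}}$, precisely as you do. Your write-up simply spells out the two conditions of ``represents'' in more detail than the paper does.
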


We call a body $(u, v)$ \emph{single-headed} (resp., \emph{multi-headed}) with respect to a directed hypergraph $H$ representing a graph $G$ if it is the body of exactly one (resp., more than one) hyperarc of $H$.

\begin{remark} Assume that the directed hypergraph $H = (V, F)$ represents the graph \mbox{$G = (V, E)$} and $|V| \ge 4$. If $\cl{u}{v}{w} \in F$ and $u, v$ is single-headed in $H$ then $w$ must be a neighbor of $u$ or $v$. Indeed, otherwise $cl_{H}(u,v)=\{u,v,w\} \subset V$. This is a fact which we use numerous times in the proofs without referring to it explicitly.\end{remark}

The following proposition generalizes the argument proving Proposition~\ref{prop:hamsin}.

\begin{proposition} \label{prop:SpanningSubgraph}
Let $G$ be a graph with no isolated vertices and let $G'$ be a spanning subgraph of $G$ with no isolated vertices. Then
\[h(G)\leq h(G') + |E(G)| - |E(G')|.\]
If $G'$ is single-headed then $G$ is also single-headed.
\end{proposition}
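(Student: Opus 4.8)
The plan is to bootstrap an optimal representation of $G'$ into a representation of $G$, exactly in the spirit of the cycle-plus-chords construction preceding Proposition~\ref{prop:hamsin}: the edges already present in $G'$ keep the heads they were given, and each new edge of $G$ gets a single extra hyperarc that ``feeds into'' an incident edge of $G'$.

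Concretely, first I would fix a directed hypergraph $H' = (V, F')$ representing $G'$ with $|F'| = h(G')$, noting that $V := V(G) = V(G')$ because $G'$ is spanning. For every edge $\edge{u}{v} \in E(G) \setminus E(G')$ I would pick a neighbor $u'$ of $u$ in $G'$ (one exists since $G'$ has no isolated vertices, and necessarily $u' \neq v$) and add the hyperarc $\cl{u}{v}{u'}$. Call the resulting hypergraph $H = (V,F)$; then $|F| = |F'| + |E(G)| - |E(G')| = h(G') + |E(G)| - |E(G')|$, so it only remains to check that $H$ represents $G$.

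Checking condition~(i): for $\edge{u}{v}\in E(G')$ we have $cl_H(u,v) \supseteq cl_{H'}(u,v) = V$ since $F \supseteq F'$; for $\edge{u}{v}\in E(G)\setminus E(G')$, forward chaining from $\{u,v\}$ in $H$ first fires $\cl{u}{v}{u'}$, and since $\edge{u}{u'} \in E(G')$ gives $cl_{H'}(u,u') = V$, it then marks all of $V$. Checking condition~(ii): if $\edge{u}{v} \notin E(G)$ then also $\edge{u}{v} \notin E(G')$, so $cl_{H'}(u,v) = \{u,v\}$; this says that every hyperarc of $H'$ with body $\{u,v\}$ has its head in $\{u,v\}$, and the hyperarcs we added all have bodies lying in $E(G)$, hence none has body $\{u,v\}$. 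So no hyperarc of $H$ with body $\{u,v\}$ can mark a new vertex and forward chaining from $\{u,v\}$ halts at once, i.e.\ $cl_H(u,v) = \{u,v\}$. This yields $h(G) \le h(G') + |E(G)| - |E(G')|$, and if $G'$ is single-headed then $h(G') = |E(G')|$ makes the right-hand side equal to $|E(G)|$, so combining with $h(G) \ge |E(G)|$ from Proposition~\ref{prop:easyBounds} (when $|V| \ge 3$) forces $h(G) = |E(G)|$.

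There is no deep obstacle here; the one point requiring a little care is condition~(ii) --- that the new hyperarcs cannot be triggered by, or otherwise enlarge the closure of, a non-edge pair of $G$. This is exactly where the hypotheses $E(G') \subseteq E(G)$ (so a non-edge of $G$ is a non-edge of $G'$, hence already inert in $H'$) and ``$G'$ has no isolated vertices'' (so every new edge of $G$ has an incident $G'$-edge to hand off to) get used, and once these are in place the verification is immediate.
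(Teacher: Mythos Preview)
Your proof is correct and follows essentially the same approach as the paper: extend an optimal representation $H'$ of $G'$ by adding, for each new edge $(u,v)\in E(G)\setminus E(G')$, a single hyperarc pointing to a $G'$-neighbor of one endpoint. You are in fact more careful than the paper, explicitly verifying both conditions~(i) and~(ii) and noting that $u'\neq v$, whereas the paper simply asserts that the resulting hypergraph ``satisfies the requirements.''
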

\begin{proof}
 Let $H'$ be a directed hypergraph of size $\h(G')$ representing $G'$. Since $G'$ is a spanning subgraph of $G$ and contains no isolated vertices, for every edge $(u,v) \in E(G) \setminus E(G')$ there is an edge $(v,w)\in E(G')$. The directed hypergraph $H$ representing $G$ obtained from $H'$ by adding the hyperarc $\cl{u}{v}{w}$ to $H'$ for each edge $\edge{u}{v} \in E(G) \setminus E(G')$
satisfies the requirements. The second statement follows trivially.
\end{proof}

It may be interesting to note that if we start with a matching on $n$ vertices and keep on adding edges until the graph becomes a Hamilton cycle,
the hydra number does \emph{not} increase, each graph has hydra number $n$. We do not know if the hydra number can go down after adding an edge.

A second proposition gives a sufficient condition for single-headedness based on single-headedness of a non-spanning subgraph. This will be a crucial construction in showing that single-headed graphs can have quite complex structure in Theorem~\ref{thm:largePathCover}.
\begin{proposition} \label{prop:subgraphAddedVertex}
Let $G$ be a connected graph and $\edge{u}{v}\not\in E(G)$. Construct the graph $\hat{G}$ with vertex set $V(\hat{G})=V(G)\cup \{w\}$ and edge set $E(\hat{G})=E(G)\cup\{(u,v),(v,w)\}$, for some $w\not\in V(G)$. If $G$ is single-headed then $\hat{G}$ is single-headed.
\end{proposition}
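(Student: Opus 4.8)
The plan is to start from an optimal representation $H$ of $G$ (one with exactly one head per edge, which exists since $G$ is single-headed) and modify it into a representation $\hat H$ of $\hat G$ having exactly one head per edge of $\hat G$. Since $\hat G$ has exactly two more edges than $G$, namely $(u,v)$ and $(v,w)$, I need $\hat H$ to have exactly two more hyperarcs than $H$; the natural guess is to keep all hyperarcs of $H$ and add one hyperarc with body $(u,v)$ and one with body $(v,w)$. The obvious candidates are $\cl{u}{v}{w}$ (forcing $(u,v)$ to first produce the new vertex $w$) together with $\cl{v}{w}{z}$ for some suitably chosen $z \in V(G)$. The point of adding $w$ as a pendant attached to $v$ is that once $(u,v)$ fires $\cl{u}{v}{w}$ we have marked $\{u,v,w\}$, and if $\cl{v}{w}{z}$ then re-enters an edge of $G$ we can invoke the reachability already guaranteed by $H$.

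First I would verify condition (ii) of representation for the new non-edges. The new vertex $w$ has degree one in $\hat G$, so every pair involving $w$ other than $(v,w)$ is a non-edge; I must check that $cl_{\hat H}(w,x) = \{w,x\}$ for all $x \neq v$. Since $w$ appears as a head only in $\cl{u}{v}{w}$ and as a body only in $\cl{v}{w}{z}$, starting from $\{w,x\}$ the only hyperarc that could fire is $\cl{v}{w}{z}$, which needs $v$ marked — and $v$ is not marked initially, nor can it be produced without already having some edge of $G$ fire, which cannot happen from a two-element set containing the isolated-in-$G$ vertex $w$. So no new vertex gets marked, giving $cl_{\hat H}(w,x)=\{w,x\}$. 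Similarly, for a non-edge $(a,b)$ of $G$ with $a,b \in V(G)$, I must check that adding the two new hyperarcs does not enlarge $cl_H(a,b)=\{a,b\}$: neither $\cl{u}{v}{w}$ nor $\cl{v}{w}{z}$ can fire from $\{a,b\}$ since that would require $\{u,v\}\subseteq\{a,b\}$ (impossible as $(u,v)\notin E(G)$ but $(a,b)$ is a non-edge — more carefully, $\{u,v\}\not\subseteq\{a,b\}$ because $(a,b)$ being a non-edge and $(u,v)$ being a non-edge does not immediately give this, so I should instead argue that since $cl_H(a,b)=\{a,b\}$, no edge of $G$ fires, hence $u$ and $v$ are never both marked, hence the new hyperarcs never fire).

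Next I would verify condition (i): $cl_{\hat H}(u,v) = V(\hat G)$ and $cl_{\hat H}(v,w) = V(\hat G)$. Starting from $\{u,v\}$: the hyperarc $\cl{u}{v}{w}$ fires, marking $w$; now I want to reach all of $V(G)$. Here is where the choice of $z$ matters and where I expect the main obstacle to lie: I need some mechanism that, from $\{u,v,w\}$, recovers all of $V(G)$ using only hyperarcs of $H$ plus possibly $\cl{v}{w}{z}$. The clean way is to pick an edge $(v,z)\in E(G)$ incident to $v$ (which exists since $G$ is connected with at least two vertices, as $\hat G$ is built on $V(G)\cup\{w\}$ — I should make sure $G$ has an edge at $v$; if $v$ were isolated in $G$ then $G$ has an isolated vertex, excluded by our standing assumption, unless $|V(G)|=1$, a degenerate case I would handle separately or rule out). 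Then setting $z$ to be an appropriate vertex, I want $cl_H$ of some pair I can reach to be all of $V(G)$. Actually the simplest route: since $G$ is single-headed and connected, for the edge $(v,z)$ we have $cl_H(v,z)=V(G)$; and from $\{u,v\}$ I can reach $v$ and $z$ provided $z \in cl_H(u,v)$... but $u,v$ is a non-edge of $G$, so $cl_H(u,v)$ might be just $\{u,v\}$. This is the crux. The fix is to route through $w$: choose the new hyperarc to be $\cl{v}{w}{z}$ where $(v,z)$ is \emph{any} edge of $G$; then from $\{u,v\}$ we get $w$ (via $\cl{u}{v}{w}$), then $z$ (via $\cl{v}{w}{z}$), so $\{v,z\}\subseteq cl_{\hat H}(u,v)$, and since $H\subseteq\hat H$ and $cl_H(v,z)=V(G)$ by single-headedness of $G$ applied to the edge $(v,z)$, monotonicity of forward chaining gives $cl_{\hat H}(u,v)\supseteq V(G)$, hence $=V(\hat G)$. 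For the pair $(v,w)$: fire $\cl{v}{w}{z}$ to get $z$, then as before $cl_H(v,z)=V(G)$ finishes it. The only remaining worry is the degenerate case where $G$ has a single vertex or $v$ is isolated in $G$; since we assume no isolated vertices and $\hat G$ requires $G$ connected, $v$ has a neighbor in $G$, so an edge $(v,z)$ always exists, and the argument goes through. Assembling these checks shows $\hat H$ represents $\hat G$ with $|E(\hat G)|$ hyperarcs, so $\hat G$ is single-headed. $\qed$
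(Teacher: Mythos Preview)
Your construction is exactly the one in the paper: keep $H$ and add the two hyperarcs $\cl{u}{v}{w}$ and $\cl{v}{w}{z}$ with $z$ a $G$-neighbour of $v$, then verify that $\hat H$ represents $\hat G$. You are in fact more careful than the paper about condition~(ii); the one small omission is that you should also record condition~(i) for the old edges $e\in E(G)$ --- from $cl_H(e)=V(G)$ and $H\subseteq\hat H$ you get $V(G)\subseteq cl_{\hat H}(e)$, and then $\cl{u}{v}{w}$ fires to add $w$ --- which is the one part the paper does spell out.
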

\begin{proof}
\sloppypar{Let $H$ be a directed hypergraph representing $G$ and containing exactly $|E(G)|$ hyperarcs. Construct $\hat{H}$ from $H$ by adding hyperarcs $\cl{u}{v}{w}$ and $\cl{v}{w}{z}$, where $z$ is a neighbor of $v$ in $G$ guaranteed to exist by the connectivity of $G$. Since all pairs in $E(G)$ reach both $u$ and $v$ in $H$ (and in $\hat{H}$), hyperarc $\cl{u}{v}{w}$ ensures all pairs in $E(G)$ can reach in $\hat{H}$ the new variable $w$ as well. On the other hand, hyperarc $\cl{v}{w}{z}$ ensures that the new pairs $(u,v)$ and $(v,w)$ can reach all other variables. Finally, there are $|E(\hat{G})|$ hyperarcs in $H$.}
\end{proof}

Next we see a simple sufficient condition for a graph \emph{not} to be single-headed.

\begin{proposition}\label{prop:bridge}
Let $G$ be the union of two disjoint subgraphs \mbox{$G_1=(V_1,E_1)$} and \mbox{$G_2=(V_2,E_2)$,} connected by a cut-edge. If both $G_1$ and $G_2$ contain at least two vertices then $G$ is not single-headed.
\end{proposition}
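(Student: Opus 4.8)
The plan is to argue by contradiction. Suppose $G$ is single-headed, witnessed by a hypergraph $H=(V,F)$ with $|F|=|E(G)|$; since each edge must be the body of at least one hyperarc (as observed in the proof of Proposition~\ref{prop:easyBounds}), the correspondence is forced to be a bijection, so every body is an edge and each edge $e$ has a well-defined head $head(e)$. Write $c=\edge{a}{b}$ for the cut-edge, $a\in V_1$, $b\in V_2$, so that $E(G)=E_1\cup E_2\cup\{c\}$, $a$ is the only neighbor of $b$ inside $V_1$, and $b$ is the only neighbor of $a$ inside $V_2$. First I would note $head(c)\notin\{a,b\}$: otherwise, after its first (useless) firing, no hyperarc could fire from $\{a,b\}$, giving $cl_H(a,b)=\{a,b\}\ne V$ (here $|V|\ge 4$). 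By relabelling the two sides if necessary I may then assume $head(c)\in V_1\setminus\{a\}$.

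The key lemma I would isolate is $(\star)$: for every edge $\edge{p}{q}\in E_1$ we have $head(\edge{p}{q})\in V_1\cup\{b\}$. Indeed, if the head were some $y\in V_2\setminus\{b\}$, then both $\edge{p}{y}$ and $\edge{q}{y}$ are non-edges of $G$ (the cut-edge is the only $V_1$–$V_2$ edge and $y\ne b$), so forward chaining from $\{p,q\}$ can fire only the hyperarc with body $\edge{p}{q}$ and then halts, giving $cl_H(p,q)=\{p,q,y\}$ — impossible, since $\edge{p}{q}\in E(G)$ forces $cl_H(p,q)=V$ and $|V|\ge 4$. (The mirror statement for $E_2$ also holds and is the one I would use had the relabelling gone the other way.)

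Then I would close the argument as follows. Pick any edge $e_0=\edge{r}{s}\in E_1$; such an edge exists, since a vertex of $V_1\setminus\{a\}$ with no neighbor inside $V_1$ would be isolated in $G$. Run forward chaining from $\{r,s\}$ and prove, by induction along the marking sequence, that it never marks a vertex outside $V_1\cup\{b\}$: the seed $\{r,s\}$ lies in $V_1$; and if a new vertex $v$ is marked by $\cl{p}{q}{v}$ with $p,q$ already marked (hence in $V_1\cup\{b\}$) and $\edge{p}{q}$ an edge, then either $p,q\in V_1$, whence $\edge{p}{q}\in E_1$ and $v\in V_1\cup\{b\}$ by $(\star)$, or one of $p,q$ equals $b$, which forces $\edge{p}{q}=c$ and $v=head(c)\in V_1\setminus\{a\}$. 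So $cl_H(r,s)\subseteq V_1\cup\{b\}$, which omits the nonempty set $V_2\setminus\{b\}$ (using $|V_2|\ge 2$), contradicting $cl_H(r,s)=V$. Hence $h(G)>|E(G)|$.

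The main obstacle is finding the right invariant to induct on, and $(\star)$ is precisely that: it says a body-edge of $G_1$ cannot have its head stranded strictly across the cut. Everything after $(\star)$ is bookkeeping; the only mildly delicate point is making sure the forced head $head(c)$ sits on the same side to which we are confining the closure, which is exactly what the relabelling step arranges.
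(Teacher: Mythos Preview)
Your proof is correct and follows essentially the same approach as the paper's. Both arguments hinge on the observation that in a single-headed representation the head of any edge in $E_1$ must lie in $V_1\cup\{b\}$ (your lemma~$(\star)$, which is the paper's standing Remark specialized to $E_1$), and then trap the closure of some edge on the $V_1$ side inside $V_1\cup\{b\}$. The only cosmetic difference is the choice of starting edge: the paper takes the specific edge $(head(c),a)$, which it knows to be an edge by the Remark, whereas you pick an arbitrary edge of $E_1$ (after checking one exists) and explicitly spell out the induction along the marking sequence, including the case where the cut-edge itself fires.
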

\begin{proof}
Assume that $G$ is single-headed and let $H$ be a directed hypergraph demonstrating this. Let $u \in G_1$, $v \in G_2$, and $(u,v)$ be the cut-edge. There is exactly one hyperarc of the form $\cl{u}{v}{z}$ in $H$. If $z$ is in $G_1$ (resp., in $G_2$) then forward chaining started from $z, u$ (resp., $z, v$) cannot mark any vertices in $G_2$ (resp., $G_1$) other than $v$ (resp., $u$).
\end{proof}

\section{Line graphs and a hydra number upper bound}\label{sect:LineGraphs}

This section contains an upper bound to the hydra number of a general graph.
\subsection{Hydra number upper bound}
The \emph{line graph} $L(G)$ of $G$ has vertex set \mbox{$V(L(G))=E(G)$} and edge set $E(L(G))=\{\edge{e}{f}| e\neq f\in E(G)\textrm{ and } e\cap f \neq \emptyset\}$. A (vertex-disjoint) \emph{path cover} of $G$ is a set of vertex-disjoint paths such that every vertex $v\in V$ is in exactly one path. The \emph{path cover number} of $G$, $\pcn(G)$, is the smallest integer $k$ such that $G$ has a path cover containing $k$ paths. Also recall that a spanning subgraph $G'$ of $G$ is a subgraph with vertex set $V(G')=V(G)$ and edge set $E(G')\subseteq E(G)$. We note that a spanning subgraph need not be connected.

In Proposition~\ref{prop:hamsin} we noted that Hamiltonian graphs are single-headed. The second part of Theorem~\ref{thm:HydraUpperBound} extends this to show that Hamiltonicity of the line graph of $G$, and furthermore, Hamiltonicity of the line graph of a subgraph of $G$ with no isolated vertices, is also sufficient for single-headedness.

\begin{remark} Hamiltonicity of $L(G)$ is a strictly weaker condition than Hamiltonicity of $G$. In, particular Hamiltonicity of $G$ is easily seen to imply Hamiltonicity of the line graph, and a triangle with a pendant edge shows that the converse fails. \end{remark}

Furthermore, the number of edges plus the path cover number of the line graph of any spanning subgraph with no isolated vertices gives a general upper bound for the hydra number. Thus, for a graph $G$ we define $\p(G)$ to be the following parameter of the graph:
\[\p(G)=\min\{ \,\pcn(L(G')): \,G'\textrm{ is a spanning subgraph of $G$ with no isolated vertices}\, \}\]

\begin{theorem}\label{thm:HydraUpperBound}
Let $G$ be a graph with no isolated vertices. Then  \[\h(G)\leq \left|E(G)\right|+\p(G).\]
In addition, if there exists a spanning subgraph $G'$ with no isolated vertices such that $L(G')$ is Hamiltonian then $G$ is single-headed.
\end{theorem}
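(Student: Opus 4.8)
The plan is to turn a path cover of $L(G')$ into a hypergraph representing $G$, counting hyperarcs carefully. Fix a spanning subgraph $G'$ with no isolated vertices attaining $\p(G) = \pcn(L(G'))$, and fix an optimal path cover $P_1, \dots, P_k$ of $L(G')$ with $k = \p(G)$. Each $P_i$ is a sequence of edges of $G'$, say $e_1^{(i)}, e_2^{(i)}, \dots$, in which consecutive edges share a vertex. The key construction: for each $i$ and each consecutive pair $e_j^{(i)}, e_{j+1}^{(i)}$ in $P_i$, orient things so that the body $e_j^{(i)}$ gets as its head the vertex of $e_{j+1}^{(i)}$ that is \emph{not} shared with $e_j^{(i)}$ — this is exactly the "cycle forward chaining" idea from \eqref{eq:cycfor}, generalized from a cycle to an arbitrary walk in the line graph. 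Intuitively, once both endpoints of $e_j^{(i)}$ are marked, this hyperarc marks the far endpoint of $e_{j+1}^{(i)}$, and since $e_j^{(i)}$ and $e_{j+1}^{(i)}$ share a vertex that is already marked, both endpoints of $e_{j+1}^{(i)}$ become marked, so chaining propagates down the whole path $P_i$. Because $P_1, \dots, P_k$ covers every vertex of $L(G')$, i.e.\ every edge of $G'$, each edge of $G'$ is the body of exactly one such hyperarc, except for the $k$ "last" edges $e_{\text{last}}^{(i)}$ of the paths, which so far have no head.

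Next I would handle three things. First, connect the paths together and give the last edges heads: for each $i$, add one hyperarc whose body is $e_{\text{last}}^{(i)}$ and whose head is an endpoint of some edge in $P_{i+1}$ (indices mod $k$); choosing the head to be the shared vertex's partner on the first edge of $P_{i+1}$, or if the paths are vertex-disjoint in $G$ one just needs that from any edge of $P_i$ one can already reach, say, $e_1^{(i+1)}$ — more simply, once every edge of every $P_{\ell}$ with $\ell \ne i$ is "activated," firing from any edge $e$ of $P_i$ reaches all of $V(G')$. This costs $k = \p(G)$ extra hyperarcs. Second, I must check that starting from the two endpoints of an arbitrary edge $e \in E(G')$, forward chaining marks all of $V(G') = V(G)$: $e$ lies on some $P_i$, chaining propagates to the end of $P_i$, then the connector hyperarc jumps to $P_{i+1}$, and so on around all $k$ paths, covering $E(G')$ and hence (since $G'$ has no isolated vertices) all of $V$. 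Third, for the edges in $E(G) \setminus E(G')$, invoke Proposition~\ref{prop:SpanningSubgraph}: add one hyperarc per such edge pointing into a neighbor, and note these bodies are never needed to reach anything since the $G'$-part already reaches all of $V$, while each is reachable because its endpoints include a vertex of $G'$ which gets marked. The total count is $|E(G')| + k + (|E(G)| - |E(G')|) = |E(G)| + \p(G)$.

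For the second statement, when some spanning subgraph $G'$ has $L(G')$ Hamiltonian, $\pcn(L(G')) = 1$, so a single path $P_1$ covers all of $E(G')$; in fact take a Hamilton \emph{cycle} of $L(G')$, which gives a cyclic sequence of edges of $G'$ in which consecutive ones meet, and assign heads exactly as in \eqref{eq:cycfor} — one head per edge, no connector hyperarcs needed — so each edge of $G'$ is single-headed, chaining around the cycle marks all of $V(G')$ from any edge, and the extra $|E(G)| - |E(G')|$ edges are handled single-headedly by Proposition~\ref{prop:SpanningSubgraph}, giving exactly $|E(G)|$ hyperarcs.

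The main obstacle I anticipate is the bookkeeping in the propagation argument — specifically verifying that the "far endpoint" head assignment really does keep both endpoints of the next edge marked at every step (this relies on consecutive line-graph-adjacent edges sharing a vertex, and on that shared vertex already being marked), and handling the boundary case where a path $P_i$ has length one (a single edge with no successor in its own path), which must be caught immediately by its connector hyperarc. One also has to be a little careful that the connector hyperarcs do not accidentally let a \emph{non}-edge of $G$ reach new vertices, but since every hyperarc body we introduce is an edge of $G$, condition (ii) of representation is automatic. I would also double check the degenerate small cases ($|V| = 3$, or $G'$ a single edge) separately.
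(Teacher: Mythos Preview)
Your overall strategy is exactly the paper's: build a hypergraph for $G'$ from a minimum path cover of $L(G')$ by propagating along each path, link the paths cyclically, then lift to $G$ via Proposition~\ref{prop:SpanningSubgraph}. The Hamiltonian case is also handled just as in the paper.

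There is, however, a genuine gap in the linking step. You propose to add \emph{one} connector hyperarc per path, with body $e_{\text{last}}^{(i)}$ and head a single endpoint of $e_1^{(i+1)}$. This does not suffice: to fire the first hyperarc along $P_{i+1}$ you need \emph{both} endpoints of $e_1^{(i+1)}$ marked, and in general $e_{\text{last}}^{(i)}$ and $e_1^{(i+1)}$ share no vertex of $G'$ (indeed, if they did for every $i$ the paths could be concatenated, contradicting minimality of $k$ unless $L(G')$ has a Hamilton cycle). Concretely, take $G' = P_4$, the path $a\!-\!b\!-\!c\!-\!d$, with $k=1$. Your construction yields $a,b\to c$; $b,c\to d$; and a single connector $c,d\to a$. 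Starting from $\{c,d\}$ you mark $a$ and then stall, since $b$ is never marked. Your arithmetic also reflects this: step~2 produces $|E(G')|-k$ hyperarcs and step~3 produces $k$, giving $|E(G')|$ in total, not the $|E(G')|+k$ you then claim.

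The fix, which is precisely what the paper does, is to spend \emph{two} connector hyperarcs per transition: from $e_{\text{last}}^{(i)} = (u_i,v_i)$ add both $u_i,v_i\to x_{i+1}$ and $u_i,v_i\to y_{i+1}$, where $e_1^{(i+1)} = (x_{i+1},y_{i+1})$. Now both endpoints of $e_1^{(i+1)}$ get marked and chaining proceeds through $P_{i+1}$. The count becomes $(|E(G')|-k)+2k = |E(G')|+k$, matching the bound; adding the $|E(G)|-|E(G')|$ single-headed arcs for the remaining edges of $G$ then gives $|E(G)|+\p(G)$ as required.
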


\begin{proof}
By Proposition~\ref{prop:SpanningSubgraph} it is sufficient to prove the bounds for an appropriate subgraph $G'$.

Let $k=\p(G)$, $\{P_i\}_{1}^{k}$ be the minimum path cover of $L(G')$ for some spanning subgraph $G'$ with no isolated vertices and $l_i$ be the number of vertices of the path $P_i$. Direct the edges of each path $P_i$ so that $\vec{P_i}$ is a directed path. Let $e_i=\edge{x_i}{y_i}$ and $f_i=\edge{u_i}{v_i}$ be the first and last edges in $\vec{P_i}$, respectively (if $\vec{P_i}$ is a single vertex then $e_i = f_i$).

We construct a directed hypergraph $H$ representing $G'$ and satisfying the requirements  as follows. First, for each path $\vec{P_i}$ of at least 2 vertices we add $l_{i}-1$ hyperarcs: for each directed edge $\arc{e}{f}\in\vec{P_i}$, where $e=\edge{u}{v}$ and $f=\edge{v}{w}$, add a hyperarc $\cl{u}{v}{w}$ to $H$.

If $k = 1$ then we complete the construction of $H$ by adding two hyperarcs, $\cl{u_1}{v_1}{x_1}$ and $\cl{u_1}{v_1}{y_1}$. If $k>1$ then we complete the construction by adding the $2k$ hyperarcs
\[
 (\cl{u_k}{v_k}{x_1}), (\cl{u_k}{v_k}{y_1}) \mbox{ and } (\cl{u_i}{v_i}{x_{i+1}}), (\cl{u_i}{v_i}{y_{i+1}}),
\]
 for $1\leq i\leq k-1.$ It follows directly from the construction that the hypergraph $H$  represents $G'$.

For the second part of the theorem, suppose $G'$ is a spanning subgraph with no isolated vertices such that $L(G')$ is Hamiltonian.
Let $C$ be a Hamiltonian cycle in $L(G')$. Direct the edges of $C$ so that $\vec{C}$ is a directed Hamiltonian cycle. The directed hypergraph $H$ satisfying the requirements is constructed by adding a hyperarc $\cl{u}{v}{w}$ for each directed edge $\arc{e}{f}\in\vec{C}$, where $e=\edge{u}{v}$ and $f=\edge{v}{w}$.
\end{proof}

\subsection{Disconnected graphs}

In this subsection we add a few remarks on the hydra numbers of disconnected graphs. First note that the general upper bound of $2|E(G)|$, which applies to
all graphs, can be improved for connected graphs as follows: build a hydra for a spanning tree and then add additional single-headed edges using
Proposition~\ref{prop:SpanningSubgraph}. Theorem~\ref{thm:maxHydraNumberOfTrees} gives an essentially $5/4 |V(G)|$ for the hydra number of
trees, so this way one gets an upper bound of $|E(G)| + |V(G)|/4$  for connected graphs. This can be extended to disconnected graphs by
adding two hyperarcs per connected component, pointing to the next connected component in a circular order.

On the other hand, the following lower bound is easy to see for any graph $G$ containing at least two connected components.
\begin{proposition}\label{conj:components}
Let $G$ consist of $k\geq2$ connected components $G_1, G_2, \ldots, G_k$ for \mbox{$k\geq2$}, and let each $G_i$ contain at least two vertices. Then \mbox{$\h(G)\geq |E(G)|+k.$}
\end{proposition}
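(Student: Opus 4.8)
The plan is to prove the sharper local bound that in any optimal representation each component $G_i$ must receive at least $|E(G_i)|+1$ hyperarcs, and then sum over the $k$ components.

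First I would normalize the hypergraph. Let $H=(V,F)$ represent $G$ with $|F|=h(G)$; by minimality $H$ contains no trivial hyperarc $\cl{a}{b}{a}$. Then every hyperarc body of $H$ is an edge of $G$: if $\cl{a}{b}{c}\in F$ with $\edge{a}{b}\notin E(G)$ and $c\notin\{a,b\}$, then $c\in cl_H(a,b)$, contradicting part (ii) of the definition of representation. Consequently $F$ splits into a disjoint union $F=F_1\cup\dots\cup F_k$, where $F_i$ collects the hyperarcs whose body lies in $E(G_i)$, and so $|F|=\sum_{i=1}^{k}|F_i|$.

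Next, fixing $i$, I would show $|F_i|\ge|E(G_i)|+1$. Each edge of $G_i$ is the body of at least one hyperarc (as in the proof of Proposition~\ref{prop:easyBounds}), so $|F_i|\ge|E(G_i)|$; suppose for contradiction that equality holds. Then every edge of $G_i$ is single-headed in $H$. Since $k\ge2$ and every component has at least two vertices, $|V|\ge4$, so the unnumbered remark preceding Proposition~\ref{prop:SpanningSubgraph} applies: the head of each hyperarc with a single-headed body is a neighbour of one of the two body vertices, and hence, since $G_i$ is a connected component of $G$, it lies in $V(G_i)$. Now take any edge $\edge{u}{v}\in E(G_i)$ (one exists since $G_i$ is connected on at least two vertices) and run forward chaining from $\{u,v\}$: by induction every marked vertex stays in $V(G_i)$, because any hyperarc that fires has both body vertices already marked, hence both in $V(G_i)$, hence its body is an edge of $G_i$, and therefore its head is in $V(G_i)$ as well. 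Thus $cl_H(u,v)\subseteq V(G_i)\subsetneq V$, contradicting part (i) of the definition. So $|F_i|\ge|E(G_i)|+1$.

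Summing these inequalities yields $h(G)=|F|=\sum_{i=1}^{k}|F_i|\ge\sum_{i=1}^{k}\bigl(|E(G_i)|+1\bigr)=|E(G)|+k$, as desired. The one step with real content is the contradiction in the middle paragraph: the key observation is that once all edges of a component are single-headed, forward chaining started inside that component can never leave it, which follows by combining the ``head of a single-headed body is a neighbour'' fact with the fact that a connected component is closed under adjacency. The remaining steps are bookkeeping.
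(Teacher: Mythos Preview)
Your proof is correct and follows essentially the same approach as the paper: both hinge on the observation that if every edge of a component $G_i$ were single-headed, then all heads would be neighbours and forward chaining from inside $G_i$ could never leave $V(G_i)$. The only difference is organizational---the paper directly exhibits, for each component, a multi-headed body (namely one whose head points outside the component), whereas you partition $F$ by body-component and bound each $|F_i|$ separately; the underlying idea and the key lemma invoked are the same.
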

\begin{proof}
Consider a directed hypergraph $H$ representing $G$. For each component $G_i$ there must be an edge $\edge{u_i}{v_i}\in G_i$ such that $\cl{u_i}{v_i}{w}\in H$ for some $w\not\in G_i$, and, since $w$ is not adjacent to $u_i$ or $v_i$ in $G$, $\edge{u_i}{v_i}$ cannot be single-headed in $H$.
\end{proof}

The relationship between the hydra number of a disconnected graph and the sum of the hydra numbers of its components is open. For single-headed components at least one multi-headed edge is needed, otherwise other components cannot be reached. Thus one can ask the following.

\begin{problem}
Let $G$ consist of $k$ connected components $G_1, G_2, \ldots, G_k$ for \mbox{$k\geq2$}, such that each $G_i$ contain at least two vertices and $s$ of the
components are single-headed. Then does the following hold:
\[ \h(G)= \sum_{i=1}^{k} h(G_i)+s.\]
\end{problem}

 It can be shown that the answer is positive when each $G_i$ is single-headed or contains a spanning caterpillar tree. This is because $\sum_{i=1}^{k} h(G_i)+s=|E(G)|+k$ by Proposition~\ref{prop:SpanningSubgraph} and Theorem~\ref{thm:caterpillars}. 

\subsection{Hydras and the total interval number of a graph}
Theorem~\ref{thm:HydraUpperBound}, in relating the hydra number of a graph $G$ with the path cover number of the line graph of its subgraphs, indicates a connection with another graph parameter, the total interval number of a graph, introduced independently by \cite{GriggsWest1980} and \cite{TrotterHarary1979}. The \emph{total interval number} of a graph $G=(V,E)$ is the minimum number of disjoint real intervals necessary in a \emph{multiple-interval representation} of a graph; the latter is an assignment multiple disjoint real intervals to each vertex of $G$ with the property that two vertices are adjacent in $G$ if and only if their assigned sets intersect.

 In \cite{KratzkeWe1996}, it was shown that for a triangle-free graph $G$, the total interval number of $G$, $\tau(G)=|E(G)|+\pcn(L(G)).$ The same result follows from \cite{Raychaudhuri1995}. Since clearly $\p(G)\leq\pcn(L(G))$ and by Theorem~{\ref{thm:HydraUpperBound}, the hydra number of $G$ is at most the total interval number of the graph establishing the following corollary.

\begin{corollary} \label{cor:TotalIntervalNumber}
For a triangle-free graph $G$ with no isolated vertices \mbox{$\h(G)\leq\tau(G).$} In addition, for a graph $G$ containing a triangle-free subgraph $G'$ with no isolated vertices, \mbox{$\h(G)\leq\tau(G').$}
\end{corollary}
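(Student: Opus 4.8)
The plan is to derive both inequalities by combining the hydra upper bound of Theorem~\ref{thm:HydraUpperBound} with the known identity $\tau(H)=|E(H)|+\pcn(L(H))$, valid for every triangle-free graph $H$ by \cite{KratzkeWe1996} (and also obtainable from \cite{Raychaudhuri1995}). The triangle-free hypothesis is used only to invoke this identity; the rest is routine bookkeeping with parameters already in hand.

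For the first statement, observe that $G$ is a spanning subgraph of itself with no isolated vertices, so $\p(G)\le\pcn(L(G))$ straight from the definition of $\p$. Theorem~\ref{thm:HydraUpperBound} then gives
\[\h(G)\le |E(G)|+\p(G)\le |E(G)|+\pcn(L(G))=\tau(G),\]
the last equality being the cited identity applied to the triangle-free graph $G$.

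For the second statement, let $G'$ be the prescribed triangle-free spanning subgraph of $G$ with no isolated vertices. Then $G'$ is admissible in the minimum defining $\p(G)$, so $\p(G)\le\pcn(L(G'))$, and Theorem~\ref{thm:HydraUpperBound} bounds $\h(G)$ in terms of $|E(G)|$ and $\pcn(L(G'))$; rewriting $\pcn(L(G'))$ through the triangle-free identity for $G'$ and using Proposition~\ref{prop:SpanningSubgraph} to relate $\h(G)$ to $\h(G')$ and account for the edges of $E(G)\setminus E(G')$ then gives the desired bound. Equivalently, one may apply Proposition~\ref{prop:SpanningSubgraph} first, relating $\h(G)$ to $\h(G')$, and then the first statement to $G'$. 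I do not expect a real obstacle here: the mathematical content is entirely contained in Theorem~\ref{thm:HydraUpperBound} and the cited total-interval-number identity, and the only points that require a little care are checking that the subgraphs in play are genuinely spanning and free of isolated vertices, so that $\p$ and the identity apply, and keeping the edge counts straight when passing from $G'$ to $G$.
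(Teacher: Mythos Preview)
Your argument for the first inequality is correct and is exactly the paper's: $G$ itself is an admissible subgraph in the definition of $\p(G)$, so $\p(G)\le\pcn(L(G))$, and Theorem~\ref{thm:HydraUpperBound} together with the Kratzke--West identity $\tau(G)=|E(G)|+\pcn(L(G))$ finishes.

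For the second inequality, however, your sketch has a genuine gap. You (reasonably) add the hypothesis that $G'$ is spanning, but then the two routes you describe both yield only
\[
\h(G)\;\le\;|E(G)|+\pcn(L(G'))\;=\;\tau(G')+\bigl(|E(G)|-|E(G')|\bigr),
\]
and the extra term $|E(G)|-|E(G')|$ does not vanish. Your phrase ``account for the edges of $E(G)\setminus E(G')$'' sweeps this under the rug: Proposition~\ref{prop:SpanningSubgraph} \emph{adds} one hyperarc per such edge, it does not absorb them. In fact the inequality $\h(G)\le\tau(G')$ is false in general for a proper spanning triangle-free subgraph: take $G=K_4$ and $G'=C_4$; then $\h(K_4)=|E(K_4)|=6$ since $K_4$ is Hamiltonian, while $\tau(C_4)=|E(C_4)|+\pcn(L(C_4))=4+1=5$. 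So the second clause as literally stated cannot be proved, and the paper's one-line justification before the corollary addresses only the first clause. What your argument \emph{does} correctly establish is the inequality $\h(G)\le\tau(G')+|E(G)\setminus E(G')|$, or equivalently the excess bound $\h(G)-|E(G)|\le\pcn(L(G'))$; that is the honest content here.
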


In Section~\ref{subsect:Spiders} we use this corollary to bound the maximum hydra number of a graph.

\section{Single-headed graphs with large $p(G)$}\label{sect:LargePathCover}

In this section we construct a family of sparse single-headed graphs with large generalized line graph path cover number. This shows that the upper bound of Theorem~\ref{thm:HydraUpperBound} can be off by a constant factor, and that the condition of the second half of the theorem is not necessary for single-headedness.

\begin{theorem} \label{thm:largePathCover}
There is a family of single-headed graphs $\{G_k, k \geq 2\}$, where $G_k$ has $\Theta(k)$ edges and $\p(G_k)= \Theta\left(k\right)$.
\end{theorem}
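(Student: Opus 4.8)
The plan is to build an explicit family $\{G_k\}$ by iterating the construction of Proposition~\ref{prop:subgraphAddedVertex}. Start from a single-headed "seed'' graph $G_0$ (say a triangle, or a short cycle), and repeatedly apply the operation of Proposition~\ref{prop:subgraphAddedVertex}: pick a non-edge $(u,v)$ in the current graph, add both the edge $(u,v)$ and a brand-new pendant vertex $w$ together with the edge $(v,w)$. Each such step adds two edges and one vertex, preserves connectedness, and, crucially, preserves single-headedness. After $\Theta(k)$ steps we obtain a connected single-headed graph $G_k$ with $\Theta(k)$ edges; so the two easy parts of the statement — single-headedness and $\Theta(k)$ edges — fall out for free. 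The content of the theorem is the \emph{lower} bound $\p(G_k)=\Omega(k)$, i.e.\ that for \emph{every} spanning subgraph $G'$ of $G_k$ with no isolated vertices, $\pcn(L(G'))=\Omega(k)$.

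To force $\p(G_k)$ to be large I would make the choices in the iteration carefully so that the final graph has many "forced'' local obstructions to covering $L(G')$ by few paths. The cleanest way to get a handle on $\pcn(L(G'))$ is via vertices of high degree: if a vertex $x$ of $G'$ has degree $d$, then the $d$ edges at $x$ form a clique $K_d$ in $L(G')$, and a single path through $L(G')$ can use at most $3$ vertices of any clique that is an "isolated'' neighborhood — more usefully, a path visits the edge-set $\delta_{G'}(x)$ at $x$ in a way constrained by how paths in $L(G')$ correspond to trails in $G'$. Concretely, a path in $L(G')$ is (essentially) a trail in $G'$, and a trail passes through a degree-$d$ vertex at most $\lfloor d/2\rfloor$ times; so a path cover of $L(G')$ with $t$ paths yields a decomposition of $E(G')$ into $t$ trails, whence $t \ge \frac12\sum_{x}\bigl(\text{odd-degree indicator}\bigr)$ in the classical Euler-trail bound, and more to the point $t \ge \sum_x \max(0,\deg_{G'}(x)/2 - c)$ type inequalities. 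The design goal for $G_k$ is therefore: arrange that in $G_k$ there are $\Omega(k)$ vertices that are forced to have large degree (or large odd degree) in \emph{every} no-isolated-vertex spanning subgraph $G'$. Pendant vertices are the lever here — a pendant vertex $w$ attached at $v$ forces the edge $(v,w)$ into $G'$, which forces $\deg_{G'}(v)\ge 1$ at $v$; by attaching many pendants at a controlled set of hub vertices and hanging further structure off them, one can force $\Omega(k)$ vertices of $G_k$ to retain degree $\ge 3$ (in fact unbounded, if we pile up pendants) in every such $G'$, and then the trail-decomposition bound gives $\pcn(L(G'))=\Omega(k)$, hence $\p(G_k)=\Omega(k)$.

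So the concrete order of steps is: (1) define $G_k$ by the iteration, recording the invariants "connected, single-headed, every step adds $2$ edges and $1$ vertex,'' and conclude single-headedness from Proposition~\ref{prop:subgraphAddedVertex} and $|E(G_k)|=\Theta(k)$; (2) identify a set $S$ of $\Omega(k)$ vertices whose incident pendant edges are forced into any no-isolated-vertex spanning subgraph, so that every such $G'$ has $\Omega(k)$ vertices of "high'' degree; (3) prove the combinatorial lemma that a path cover of a line graph $L(G')$ of size $t$ induces a partition of $E(G')$ into at most $t$ trails (this is the standard correspondence: a path in $L(G')$ visiting edges $e_1,\dots,e_m$ corresponds to a walk in $G'$, and vertex-disjointness of the line-graph paths means edge-disjointness of the trails), and hence $t$ is at least the minimum number of trails needed to decompose $G'$, which by parity is $\ge\frac12\cdot\#\{x:\deg_{G'}(x)\text{ odd}\}$ and, for our $G'$, $=\Omega(k)$ after we also rule out the even-degree escape by adding a pendant or adjusting parity in the construction; (4) combine to get $\p(G_k)=\Omega(k)$, and note $\p(G_k)=O(k)$ trivially since $|E(G_k)|=\Theta(k)$.

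The main obstacle is step (3) together with the "every spanning subgraph'' universal quantifier in the definition of $\p$: it is not enough that $G_k$ itself has large $\pcn(L(G_k))$; an adversary may delete edges to make $L(G')$ path-cover-friendly, and I must show the pendant structure is robust enough that no deletion destroys the $\Omega(k)$ forced trail-endpoints. This is where the construction has to be tuned — e.g.\ attaching a pair of pendants (or an odd number) at each hub so that parity of the hub degree in $G'$ is pinned down regardless of which of the non-pendant edges survive, thereby guaranteeing $\Omega(k)$ odd-degree vertices in \emph{every} admissible $G'$. Getting the trail-count lemma stated in exactly the form that converts "many forced odd-degree vertices'' into "many paths in any path cover of $L(G')$'' — and checking it handles the degenerate single-vertex paths and multi-edge subtleties — is the part that needs genuine care; the rest is bookkeeping on the iteration.
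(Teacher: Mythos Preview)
Your construction via Proposition~\ref{prop:subgraphAddedVertex} is on the right track --- the paper does precisely this, starting from an $8k$-cycle with $2k$ pendant edges and applying the proposition $k$ times --- but your lower-bound argument for $\p(G_k)$ rests on a lemma that is false. You assert that a path cover of $L(G')$ by $t$ paths yields a decomposition of $E(G')$ into $t$ trails, and then plan to invoke the Euler parity bound. But a path $e_1,\dots,e_m$ in $L(G')$ only requires that consecutive $e_i, e_{i+1}$ share \emph{some} vertex; the vertex shared by $e_{i-1},e_i$ need not be an endpoint of $e_{i+1}$, so the sequence need not trace a walk in $G'$ at all. The star $G' = K_{1,n}$ makes this concrete: $L(K_{1,n}) = K_n$ has a Hamiltonian path, so $\pcn(L(G')) = 1$, while $G'$ has $n$ odd-degree leaves and requires $\lceil n/2\rceil$ trails in any edge-decomposition. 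The same example shows your intuition ``hub vertices of high degree force many line-graph paths'' points the wrong way: a degree-$d$ hub produces a clique $K_d$ in $L(G')$, and cliques are covered by a single path. No tuning of parities at hubs rescues step~(3).

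The paper's lower bound works directly in $L(G')$ without trails. Its $G_k$ carries $k$ ``diagonal'' gadgets $D_i = \{x_iv_{4i},\, x_iy_i,\, y_iz_i,\, y_iv_{4k+4i}\}$ strung between antipodal points of the long cycle, and one shows that in any path cover of $L(G')$ each enlarged gadget $X_i \supseteq D_i$ (which also includes the nearby cycle edges on both sides) must contain a path endpoint: a single path that enters $X_i$ along the cycle, threads all of $D_i$, and exits along the cycle cannot exist, by a short case check. Robustness against the ``every spanning subgraph'' quantifier comes from the pendant $z_i$, which forces $y_iz_i\in E(G')$ and then pins down the rest of $D_i$. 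Taking every other $X_i$ gives $k/2$ disjoint obstructions, hence $\pcn(L(G'))\ge k/4$.
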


\begin{proof}
\sloppypar{Consider the sequence of graphs $G_k, k\geq 2$ constructed as follows. Starting from an $8k$-cycle, with vertices $v_0,\ldots,v_{8k-1}$, and pendant edges $x_{i}v_{4i}$ and $y_iv_{4k+4i}$ for \mbox{$0\leq i\leq k-1$}, add a vertex $z_i$ and the edges $(x_i, y_i)$, $(y_i,z_i)$, for each $i$ such that $0\leq i\leq k-1$. Recall the construction in Proposition~\ref{prop:subgraphAddedVertex}.}

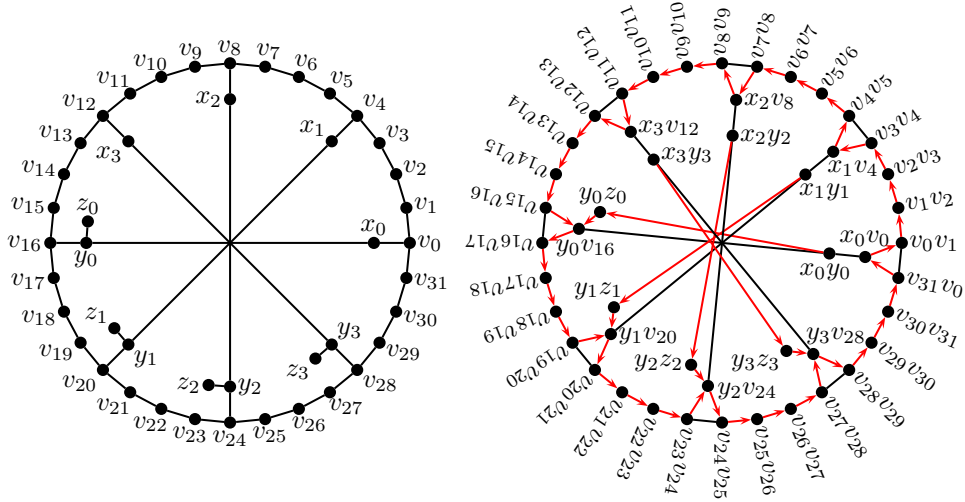
\begin{figure}[htb]
\centering
\begin{tabular}{cc}
\begin{pspicture}(30,-10)(225,210)
\psset{linewidth=.9, labelsep=0.8}
\cnode*(180.0,100.0){2.3pt}{v0}\nput{0}{v0}{$v_{0}$}
\cnode*(164.0,100.0){2.3pt}{x0}\nput{90}{x0}{$x_{0}$}
\ncline{v0}{x0}
\cnode*(178.46282243225843,115.60722576129027){2.3pt}{v1}\nput{11}{v1}{$v_{1}$}
\cnode*(173.91036260090294,130.6146745892072){2.3pt}{v2}\nput{23}{v2}{$v_{2}$}
\cnode*(166.51756898420362,144.44561864156816){2.3pt}{v3}\nput{34}{v3}{$v_{3}$}
\cnode*(156.5685424949238,156.5685424949238){2.3pt}{v4}\nput{45}{v4}{$v_{4}$}
\cnode*(145.25483399593904,145.25483399593904){2.3pt}{x1}\nput{135}{x1}{$x_{1}$}
\ncline{v4}{x1}
\cnode*(144.4456186415682,166.51756898420362){2.3pt}{v5}\nput{56}{v5}{$v_{5}$}
\cnode*(130.6146745892072,173.91036260090294){2.3pt}{v6}\nput{68}{v6}{$v_{6}$}
\cnode*(115.60722576129027,178.46282243225843){2.3pt}{v7}\nput{79}{v7}{$v_{7}$}
\cnode*(100.0,180.0){2.3pt}{v8}\nput{90}{v8}{$v_{8}$}
\cnode*(100.0,164.0){2.3pt}{x2}\nput{180}{x2}{$x_{2}$}
\ncline{v8}{x2}
\cnode*(84.39277423870975,178.46282243225843){2.3pt}{v9}\nput{101}{v9}{$v_{9}$}
\cnode*(69.38532541079283,173.91036260090294){2.3pt}{v10}\nput{113}{v10}{$v_{10}$}
\cnode*(55.55438135843184,166.51756898420362){2.3pt}{v11}\nput{124}{v11}{$v_{11}$}
\cnode*(43.4314575050762,156.5685424949238){2.3pt}{v12}\nput{135}{v12}{$v_{12}$}
\cnode*(54.74516600406096,145.25483399593904){2.3pt}{x3}\nput{225}{x3}{$x_{3}$}
\ncline{v12}{x3}
\cnode*(33.48243101579638,144.44561864156816){2.3pt}{v13}\nput{146}{v13}{$v_{13}$}
\cnode*(26.089637399097057,130.6146745892072){2.3pt}{v14}\nput{158}{v14}{$v_{14}$}
\cnode*(21.537177567741566,115.6072257612903){2.3pt}{v15}\nput{169}{v15}{$v_{15}$}
\cnode*(20.0,100.00000000000001){2.3pt}{v16}\nput{180}{v16}{$v_{16}$}
\cnode*(36.0,100.00000000000001){2.3pt}{y0}\nput{270}{y0}{$y_{0}$}
\ncline{v16}{y0}
\cnode*(36.7186510120933,109.56404047831035){2.3pt}{z0}\nput{90}{z0}{$z_{0}$}
\ncline{y0}{z0}
\cnode*(21.537177567741566,84.39277423870973){2.3pt}{v17}\nput{191}{v17}{$v_{17}$}
\cnode*(26.089637399097057,69.38532541079283){2.3pt}{v18}\nput{202}{v18}{$v_{18}$}
\cnode*(33.482431015796365,55.55438135843184){2.3pt}{v19}\nput{214}{v19}{$v_{19}$}
\cnode*(43.43145750507618,43.4314575050762){2.3pt}{v20}\nput{225}{v20}{$v_{20}$}
\cnode*(54.74516600406095,54.74516600406096){2.3pt}{y1}\nput{315}{y1}{$y_{1}$}
\ncline{v20}{y1}
\cnode*(48.490531130262816,62.01612688577458){2.3pt}{z1}\nput{135}{z1}{$z_{1}$}
\ncline{y1}{z1}
\cnode*(55.554381358431826,33.48243101579638){2.3pt}{v21}\nput{236}{v21}{$v_{21}$}
\cnode*(69.38532541079277,26.089637399097086){2.3pt}{v22}\nput{247}{v22}{$v_{22}$}
\cnode*(84.3927742387097,21.53717756774158){2.3pt}{v23}\nput{259}{v23}{$v_{23}$}
\cnode*(99.99999999999999,20.0){2.3pt}{v24}\nput{270}{v24}{$v_{24}$}
\cnode*(99.99999999999999,36.0){2.3pt}{y2}\nput{360}{y2}{$y_{2}$}
\ncline{v24}{y2}
\cnode*(90.43595952168961,36.7186510120933){2.3pt}{z2}\nput{180}{z2}{$z_{2}$}
\ncline{y2}{z2}
\cnode*(115.60722576129027,21.537177567741566){2.3pt}{v25}\nput{281}{v25}{$v_{25}$}
\cnode*(130.6146745892072,26.08963739909707){2.3pt}{v26}\nput{293}{v26}{$v_{26}$}
\cnode*(144.44561864156816,33.482431015796365){2.3pt}{v27}\nput{304}{v27}{$v_{27}$}
\cnode*(156.5685424949238,43.43145750507618){2.3pt}{v28}\nput{315}{v28}{$v_{28}$}
\cnode*(145.25483399593904,54.74516600406095){2.3pt}{y3}\nput{405}{y3}{$y_{3}$}
\ncline{v28}{y3}
\cnode*(137.9838731142254,48.4905311302628){2.3pt}{z3}\nput{225}{z3}{$z_{3}$}
\ncline{y3}{z3}
\cnode*(166.51756898420362,55.554381358431826){2.3pt}{v29}\nput{326}{v29}{$v_{29}$}
\cnode*(173.9103626009029,69.38532541079277){2.3pt}{v30}\nput{337}{v30}{$v_{30}$}
\cnode*(178.4628224322584,84.3927742387097){2.3pt}{v31}\nput{349}{v31}{$v_{31}$}
\ncline{v0}{v1}
\ncline{v1}{v2}
\ncline{v2}{v3}
\ncline{v3}{v4}
\ncline{v4}{v5}
\ncline{v5}{v6}
\ncline{v6}{v7}
\ncline{v7}{v8}
\ncline{v8}{v9}
\ncline{v9}{v10}
\ncline{v10}{v11}
\ncline{v11}{v12}
\ncline{v12}{v13}
\ncline{v13}{v14}
\ncline{v14}{v15}
\ncline{v15}{v16}
\ncline{v16}{v17}
\ncline{v17}{v18}
\ncline{v18}{v19}
\ncline{v19}{v20}
\ncline{v20}{v21}
\ncline{v21}{v22}
\ncline{v22}{v23}
\ncline{v23}{v24}
\ncline{v24}{v25}
\ncline{v25}{v26}
\ncline{v26}{v27}
\ncline{v27}{v28}
\ncline{v28}{v29}
\ncline{v29}{v30}
\ncline{v30}{v31}
\ncline{v31}{v0}
\ncline{x0}{y0}
\ncline{x1}{y1}
\ncline{x2}{y2}
\ncline{x3}{y3}
\end{pspicture}
\begin{pspicture}(10,-10)(170,190)
\psset{linewidth=.9, labelsep=0.8}
\cnode*(180.0,100.0){2.3pt}{v0v1}\nput[rot=0]{0}{v0v1}{$v_{0}v_{1}$}
\cnode*(178.46282243225843,115.60722576129027){2.3pt}{v1v2}\nput[rot=11]{11}{v1v2}{$v_{1}v_{2}$}
\cnode*(173.91036260090294,130.6146745892072){2.3pt}{v2v3}\nput[rot=23]{23}{v2v3}{$v_{2}v_{3}$}
\cnode*(166.51756898420362,144.44561864156816){2.3pt}{v3v4}\nput[rot=34]{34}{v3v4}{$v_{3}v_{4}$}
\cnode*(156.5685424949238,156.5685424949238){2.3pt}{v4v5}\nput[rot=45]{45}{v4v5}{$v_{4}v_{5}$}
\cnode*(144.4456186415682,166.51756898420362){2.3pt}{v5v6}\nput[rot=56]{56}{v5v6}{$v_{5}v_{6}$}
\cnode*(130.6146745892072,173.91036260090294){2.3pt}{v6v7}\nput[rot=68]{68}{v6v7}{$v_{6}v_{7}$}
\cnode*(115.60722576129027,178.46282243225843){2.3pt}{v7v8}\nput[rot=79]{79}{v7v8}{$v_{7}v_{8}$}
\cnode*(100.0,180.0){2.3pt}{v8v9}\nput[rot=90]{90}{v8v9}{$v_{8}v_{9}$}
\cnode*(84.39277423870975,178.46282243225843){2.3pt}{v9v10}\nput[rot=101]{101}{v9v10}{$v_{9}v_{10}$}
\cnode*(69.38532541079283,173.91036260090294){2.3pt}{v10v11}\nput[rot=113]{113}{v10v11}{$v_{10}v_{11}$}
\cnode*(55.55438135843184,166.51756898420362){2.3pt}{v11v12}\nput[rot=124]{124}{v11v12}{$v_{11}v_{12}$}
\cnode*(43.4314575050762,156.5685424949238){2.3pt}{v12v13}\nput[rot=135]{135}{v12v13}{$v_{12}v_{13}$}
\cnode*(33.48243101579638,144.44561864156816){2.3pt}{v13v14}\nput[rot=146]{146}{v13v14}{$v_{13}v_{14}$}
\cnode*(26.089637399097057,130.6146745892072){2.3pt}{v14v15}\nput[rot=158]{158}{v14v15}{$v_{14}v_{15}$}
\cnode*(21.537177567741566,115.6072257612903){2.3pt}{v15v16}\nput[rot=169]{169}{v15v16}{$v_{15}v_{16}$}
\cnode*(20.0,100.00000000000001){2.3pt}{v16v17}\nput[rot=180]{180}{v16v17}{$v_{16}v_{17}$}
\cnode*(21.537177567741566,84.39277423870973){2.3pt}{v17v18}\nput[rot=191]{191}{v17v18}{$v_{17}v_{18}$}
\cnode*(26.089637399097057,69.38532541079283){2.3pt}{v18v19}\nput[rot=202]{202}{v18v19}{$v_{18}v_{19}$}
\cnode*(33.482431015796365,55.55438135843184){2.3pt}{v19v20}\nput[rot=214]{214}{v19v20}{$v_{19}v_{20}$}
\cnode*(43.43145750507618,43.4314575050762){2.3pt}{v20v21}\nput[rot=225]{225}{v20v21}{$v_{20}v_{21}$}
\cnode*(55.554381358431826,33.48243101579638){2.3pt}{v21v22}\nput[rot=236]{236}{v21v22}{$v_{21}v_{22}$}
\cnode*(69.38532541079277,26.089637399097086){2.3pt}{v22v23}\nput[rot=247]{247}{v22v23}{$v_{22}v_{23}$}
\cnode*(84.3927742387097,21.53717756774158){2.3pt}{v23v24}\nput[rot=259]{259}{v23v24}{$v_{23}v_{24}$}
\cnode*(99.99999999999999,20.0){2.3pt}{v24v25}\nput[rot=270]{270}{v24v25}{$v_{24}v_{25}$}
\cnode*(115.60722576129027,21.537177567741566){2.3pt}{v25v26}\nput[rot=281]{281}{v25v26}{$v_{25}v_{26}$}
\cnode*(130.6146745892072,26.08963739909707){2.3pt}{v26v27}\nput[rot=293]{293}{v26v27}{$v_{26}v_{27}$}
\cnode*(144.44561864156816,33.482431015796365){2.3pt}{v27v28}\nput[rot=304]{304}{v27v28}{$v_{27}v_{28}$}
\cnode*(156.5685424949238,43.43145750507618){2.3pt}{v28v29}\nput[rot=315]{315}{v28v29}{$v_{28}v_{29}$}
\cnode*(166.51756898420362,55.554381358431826){2.3pt}{v29v30}\nput[rot=326]{326}{v29v30}{$v_{29}v_{30}$}
\cnode*(173.9103626009029,69.38532541079277){2.3pt}{v30v31}\nput[rot=337]{337}{v30v31}{$v_{30}v_{31}$}
\cnode*(178.4628224322584,84.3927742387097){2.3pt}{v31v0}\nput[rot=349]{349}{v31v0}{$v_{31}v_{0}$}
\ncline[linecolor=red]{->}{v0v1}{v1v2}
\ncline[linecolor=red]{->}{v1v2}{v2v3}
\ncline[linecolor=red]{->}{v2v3}{v3v4}
\ncline{v3v4}{v4v5}
\ncline[linecolor=red]{->}{v4v5}{v5v6}
\ncline[linecolor=red]{->}{v5v6}{v6v7}
\ncline[linecolor=red]{->}{v6v7}{v7v8}
\ncline{v7v8}{v8v9}
\ncline[linecolor=red]{->}{v8v9}{v9v10}
\ncline[linecolor=red]{->}{v9v10}{v10v11}
\ncline[linecolor=red]{->}{v10v11}{v11v12}
\ncline{v11v12}{v12v13}
\ncline[linecolor=red]{->}{v12v13}{v13v14}
\ncline[linecolor=red]{->}{v13v14}{v14v15}
\ncline[linecolor=red]{->}{v14v15}{v15v16}
\ncline{v15v16}{v16v17}
\ncline[linecolor=red]{->}{v16v17}{v17v18}
\ncline[linecolor=red]{->}{v17v18}{v18v19}
\ncline[linecolor=red]{->}{v18v19}{v19v20}
\ncline{v19v20}{v20v21}
\ncline[linecolor=red]{->}{v20v21}{v21v22}
\ncline[linecolor=red]{->}{v21v22}{v22v23}
\ncline[linecolor=red]{->}{v22v23}{v23v24}
\ncline{v23v24}{v24v25}
\ncline[linecolor=red]{->}{v24v25}{v25v26}
\ncline[linecolor=red]{->}{v25v26}{v26v27}
\ncline[linecolor=red]{->}{v26v27}{v27v28}
\ncline{v27v28}{v28v29}
\ncline[linecolor=red]{->}{v28v29}{v29v30}
\ncline[linecolor=red]{->}{v29v30}{v30v31}
\ncline[linecolor=red]{->}{v30v31}{v31v0}
\ncline{v31v0}{v0v1}
\cnode*(163.69182250702062,93.72690301890812){2.3pt}{x0v0}
\nput[labelsep=2pt]{90}{x0v0}{$x_{0}v_{0}$}
\ncline[linecolor=red]{->}{x0v0}{v0v1}
\ncline[linecolor=red]{->}{v31v0}{x0v0}
\cnode*(149.47266901521516,140.60117018647333){2.3pt}{x1v4}
\nput[labelsep=1.5pt]{-60}{x1v4}{$x_{1}v_{4}$}
\ncline[linecolor=red]{->}{x1v4}{v4v5}
\ncline[linecolor=red]{->}{v3v4}{x1v4}
\cnode*(106.2730969810919,163.6918225070206){2.3pt}{x2v8}
\nput{-6}{x2v8}{$x_{2}v_{8}$}
\ncline[linecolor=red]{->}{x2v8}{v8v9}
\ncline[linecolor=red]{->}{v7v8}{x2v8}
\cnode*(59.398829813526696,149.47266901521516){2.3pt}{x3v12}
\nput{15}{x3v12}{$x_{3}v_{12}$}
\ncline[linecolor=red]{->}{x3v12}{v12v13}
\ncline[linecolor=red]{->}{v11v12}{x3v12}
\cnode*(36.308177492979404,106.2730969810919){2.3pt}{y0v16}
\nput[labelsep=2pt]{-80}{y0v16}{$y_{0}v_{16}$}
\ncline[linecolor=red]{->}{y0v16}{v16v17}
\ncline[linecolor=red]{->}{v15v16}{y0v16}
\cnode*(50.527330984784825,59.3988298135267){2.3pt}{y1v20}\nput{354}{y1v20}{$y_{1}v_{20}$}
\ncline[linecolor=red]{->}{y1v20}{v20v21}
\ncline[linecolor=red]{->}{v19v20}{y1v20}
\cnode*(93.72690301890813,36.3081774929794){2.3pt}{y2v24}
\nput[labelsep=1.5pt]{-15}{y2v24}{$y_{2}v_{24}$}
\ncline[linecolor=red]{->}{y2v24}{v24v25}
\ncline[linecolor=red]{->}{v23v24}{y2v24}
\cnode*(140.60117018647333,50.52733098478484){2.3pt}{y3v28}
\nput[labelsep=2pt]{60}{y3v28}{$y_{3}v_{28}$}
\ncline[linecolor=red]{->}{y3v28}{v28v29}
\ncline[linecolor=red]{->}{v27v28}{y3v28}
\cnode*(147.76886688026545,95.29517726418109){2.3pt}{x0y0}
\nput[labelsep=1.5pt]{-96}{x0y0}{$x_{0}y_{0}$}
\ncline{x0v0}{x0y0}
\ncline{x0y0}{y0z0}
\cnode*(137.10450176141137,130.45087763985498){2.3pt}{x1y1}
\nput[labelsep=1.5pt]{-60}{x1y1}{$x_{1}y_{1}$}
\ncline{x1v4}{x1y1}
\ncline{x1y1}{y1z1}
\cnode*(104.70482273581892,147.76886688026545){2.3pt}{x2y2}\nput{-6}{x2y2}{$x_{2}y_{2}$}
\ncline{x2v8}{x2y2}
\ncline{x2y2}{y2z2}
\cnode*(69.54912236014502,137.10450176141137){2.3pt}{x3y3}
\nput{15}{x3y3}{$x_{3}y_{3}$}
\ncline{x3v12}{x3y3}
\ncline{x3y3}{y3z3}
\cnode*(45.71570691398952,113.7555633891206){2.3pt}{y0z0}\nput{84}{y0z0}{$y_{0}z_{0}$}
\ncline[linecolor=red]{<-}{y0v16}{y0z0}
\ncline{y0v16}{x0y0}
\ncline[linecolor=red]{<-}{y0z0}{x0y0}
\cnode*(51.888556095475394,71.34186039845255){2.3pt}{y1z1}
\nput[labelsep=1.5]{115}{y1z1}{$y_{1}z_{1}$}
\ncline[linecolor=red]{<-}{y1v20}{y1z1}
\ncline{y1v20}{x1y1}
\ncline[linecolor=red]{<-}{y1z1}{x1y1}
\cnode*(86.2444366108794,45.71570691398952){2.3pt}{y2z2}\nput{174}{y2z2}{$y_{2}z_{2}$}
\ncline[linecolor=red]{<-}{y2v24}{y2z2}
\ncline{y2v24}{x2y2}
\ncline[linecolor=red]{<-}{y2z2}{x2y2}
\cnode*(128.65813960154745,51.888556095475394){2.3pt}{y3z3}\nput{219}{y3z3}{$y_{3}z_{3}$}
\ncline[linecolor=red]{<-}{y3v28}{y3z3}
\ncline{y3v28}{x3y3}
\ncline[linecolor=red]{<-}{y3z3}{x3y3}
\end{pspicture}
\end{tabular}
\caption{Single-headed graph $G_k$ for $k=4$ from Theorem~\ref{thm:largePathCover} (left) and its line graph (right). The arcs on the line graph indicate of a smallest-possible directed hypergraph realizing the hydra. }
\label{fig:largePathCover}
\end{figure}

By Proposition~\ref{prop:subgraphAddedVertex}, $G_k$ is single-headed, since a cycle with attached pendant edges has a Hamiltonian line graph. We will show that for an arbitrary, not necessarily connected, spanning subgraph $G'\subseteq G_k$ the path cover number of $L(G')$ is at least $k/4$.

Define $D_i$ to be the set of vertices in the $i_{\textrm{th}}$ diagonal of $L(G)$, namely $x_{i}v_{4i}$, $x_{i}y_{i}$, $y_{i}z_{i}$, and $y_{i}v_{4k+4i}$. Consider an arbitrary path cover \mbox{$S=\{P_j: 1\leq j \leq s\}$} of the vertices of $L(G')$.

\begin{lemma}\label{lem:diagonal}
Let ${D_i}'=D_i\cap V(L(G'))$, and let $G[{D_i}']$ be the subgraph of $L(G')$ induced by ${D_i}'$. If $G[{D_i}']$ does not contain an endpoint of a path in $S$, then ${D_i}'=D_i$ and one path in $S$ covers all vertices in $D_i$.
\end{lemma}
\begin{proof}
Suppose that $G[{D_i}']$ does not contain an endpoint of a path in $S$, and assume for contradiction that $e\in D_i\setminus {D_i}'$.
Since $G'$ is spanning and contains no isolated vertices, it must contain the edge $(y_i,z_i)$, and so $e\neq y_iz_i$. Also $e\not\in\{ x_iy_i, y_iv_{4k+4i}\}$, else $y_iz_i$ would be a degree-1 vertex in $L(G')$, and thus it would be an endpoint of a path in $S$. Furthermore $e\neq x_iv_{4i}$, otherwise $S$ must have a path endpoint in the triangle $\{x_iy_i, y_iz_i, y_iv_{4k+4i}\}$. Thus $D_i$ is contained in the vertex set of $L(G')$, and due to the structure of the diagonal and the assumption that no path endpoints of $S$ fall in $D_i$, all vertices in the diagonal are covered by exactly one path $P$ of $S$.
\end{proof}

Define $X_i$ to include all vertices in $D_i$ along with the cycle vertices $v_{4i-3}v_{4i-2}$, $v_{4i-2}v_{4i-1}$, $v_{4i-1}v_{4i}$, $v_{4i}v_{4i+1}$, $v_{4i+1}v_{4i+2}$, $v_{4i+2}v_{4i+3}$, and their antipodes on the circle $v_{4k+4i-3}v_{4k+4i-2}$, $v_{4k+4i-2}v_{4k+4i-1}$, $v_{4k+4i-1}v_{4k+4i}$, $v_{4k+4i}v_{4k+4i+1}$, $v_{4k+4i+1}v_{4k+4i+2}$, $v_{4k+4i+2}v_{4k+4i+3}$.

Let ${X_i}'=X_i\cap V(L(G'))$. We claim that the subgraph $G[{X_i}']$ induced by the vertex set ${X_i}'$ contains at least one endpoint of a path in $S$. Suppose not. By Lemma~\ref{lem:diagonal} all vertices in $D_i$ are in $L(G')$. A case analysis shows that all other vertices in $X_i$ must be present, otherwise a degree-1, or degree-0 vertex is introduced in $G[{X_i}']$ or $G'$ is not both spanning and without isolated vertices.
Indeed, deleting two consecutive vertices along the cycle $v_{i-1}v_i$  and $v_iv_{i+1}$ would isolate vertex $v_i$ in $G'$, and deleting any one vertex $v_iv_{i+1}$, or any two non-consecutive vertices, along the cycle would make at least one of $v_{i-1}v_i$ and $v_{i+1}v_{i+2}$ a degree-0 or degree-1 vertex in $G[{X_i}']$. Thus there must be a path $P$ in $S$ going through all the vertices of $X_i$, and by hypothesis it has no endpoints in $X_i$.  But  this is not possible as no such path can include all three of the vertices $v_{4i-1}v_{4i}$, $x_iy_i$, and $v_{4i+2}v_{4i+3}$.

There are $k/2$ disjoint sets ${X_i}'$ and so there are at least $k/4$ paths in $S$.
\end{proof}


\section{The hydra number of trees}\label{sect:Trees}

In this section we obtain bounds for the hydra number in the special case of trees. This is a natural subclass to consider. Furthermore,
Proposition~\ref{prop:SpanningSubgraph} shows that the excess of a graph does not increase by adding edges and thus trees actually give examples
of graphs with maximal excess.

\subsection{Trees with Low Hydra Number}\label{subsect:TreesLowHydraNumber}
We begin the discussion of the hydra number of trees, with trees having excess 0 or 1.

A \emph{star} is a tree that contains no length-3 path. A \emph{caterpillar} is a tree for which deleting all vertices of degree one and their incident edges from the tree gives a path. We call this path the spine of $T$, and note that it is unique. A useful characterization of caterpillars is that they do not contain the subgraph in Figure~\ref{fig:caterpillar} \cite{HarariSc1971} (see also \cite[p.88]{West2001}).
\begin{figure}[hbt]
\centering
\begin{pspicture}(20,25)(50,75)
\psset{linewidth=.9, labelsep=0.8, unit=1pt}
\cnode*(30,60){2.3pt}{v0}
\cnode*(30,50){2.3pt}{v1}
\cnode*(30,40){2.3pt}{v2}
\cnode*(25,30){2.3pt}{v3}
\cnode*(35,30){2.3pt}{v4}
\cnode*(20,20){2.3pt}{v5}
\cnode*(40,20){2.3pt}{v6}
\ncline{v0}{v1}
\ncline{v1}{v2}
\ncline{v2}{v3}
\ncline{v3}{v5}
\ncline{v2}{v4}
\ncline{v4}{v6}
\end{pspicture}
\caption{The forbidden subgraph for caterpillars.}
\label{fig:caterpillar}
\end{figure}
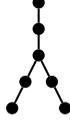

Caterpillars have been instrumental in \cite{Raychaudhuri1995}, where finding maximal caterpillars starting from the leaves of the tree was the basis for a polynomial algorithm used to find a minimum Hamiltonian completion of the line graph of a tree
(which is the same as finding a minimum path cover). A linear algorithm was later put forth by \cite{AgnetisDeMePa2001} for the same problem. For general graphs the problem is NP-hard. Furthermore, \cite{Bertossi1981} proves that finding a Hamiltonian path is NP-complete even for line graphs.

Stars are the only trees that are single-headed, and caterpillars are the only non-star trees that can attain $\h(T)=|E(T)|+1$.
\begin{theorem}\label{thm:caterpillars}
Let $T$ be a tree. Then
\begin{enumerate}[i.]
\item \mbox{$\h(T)=|E(T)|$} if and only if $T$ is a star.
\item \mbox{$\h(T)=|E(T)|+1$} if and only if $T$ is a non-star caterpillar.
\end{enumerate}
\end{theorem}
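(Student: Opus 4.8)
The plan is to prove the two equivalences by establishing both directions for each, using the bridge obstruction (Proposition~\ref{prop:bridge}), the caterpillar forbidden-subgraph characterization, and the line-graph upper bound (Theorem~\ref{thm:HydraUpperBound}). For part (i), one direction is an immediate special case of part (ii) once we know that a non-caterpillar tree has excess at least $2$ (a star trivially has $\h(T)=|E(T)|$ since every edge can point to the center and then the center together with any leaf reaches everything). The forward direction of (i) is the heart of the matter: if $T$ is not a star, it contains a path of length $3$, hence an edge $(u,v)$ both of whose endpoints have degree at least $2$; such an edge is a cut-edge separating two subgraphs each with at least two vertices, so Proposition~\ref{prop:bridge} shows $T$ is not single-headed. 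Hence single-headed trees are exactly stars.

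For the harder part (ii): the ``if'' direction (non-star caterpillars have $\h(T)\le|E(T)|+1$) should follow from Theorem~\ref{thm:HydraUpperBound}: for a caterpillar with spine $s_1,\dots,s_m$, the line graph $L(T)$ is easily seen to have a Hamiltonian path (walk along spine edges, detouring through each pendant edge at the spine vertex it hangs from), so $\p(T)\le\pcn(L(T))\le 1$, giving $\h(T)\le|E(T)|+1$; combined with part (i) and the fact that a non-star has $\h(T)>|E(T)|$, equality holds. The ``only if'' direction is the main obstacle: I must show that if $T$ is \emph{not} a caterpillar then $\h(T)\ge|E(T)|+2$, i.e.\ excess at least $2$. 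Since $T$ is not a caterpillar it contains the forbidden configuration of Figure~\ref{fig:caterpillar} --- a central vertex of degree $3$ with three paths of length $2$ emanating from it (equivalently, a ``spider'' $S_3$ with three legs of length $2$ as a subtree). The strategy is: suppose for contradiction $H$ represents $T$ with only one multi-headed body; then all but one edge of $T$ is single-headed, and a single-headed body $(a,b)$ can only have a head adjacent to $a$ or $b$ (the Remark after Proposition~\ref{prop:hamsin}). I would analyze the constraints this places on the subtree $T$ near the degree-$3$ branch vertex: essentially, for closure to spread across the three length-two legs, too many bodies near the branch point are forced to be multi-headed — at least two of them — contradicting excess $1$.

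The cleanest way to organize the ``only if'' direction is probably a localized pigeonhole / cut-edge argument rather than brute-force case analysis. Concretely: in a non-caterpillar tree pick the forbidden subdivided-$K_{1,3}$; each of its three legs, say with edges $(c,a_i),(a_i,b_i)$ for $i=1,2,3$ where $c$ is the branch vertex, gives rise to an ``outer'' edge $(a_i,b_i)$ whose single-headedness would force its unique head to be $a_i$ (since $b_i$'s only neighbor is $a_i$, and if $b_i$ is a leaf, $b_i$'s closure issues force structure); tracking which hyperarcs can push marks past the branch vertex, one shows the branch region needs two distinct multi-headed bodies, or alternatively that deleting the unique multi-headed edge and applying Proposition~\ref{prop:bridge}-style reasoning to the resulting pieces yields a contradiction. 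I expect the bookkeeping of which vertex can be a head of which forced single-headed body --- and handling the subcases according to whether the $b_i$ are leaves or have further descendants --- to be the genuinely delicate step; everything else reduces to the already-proved Proposition~\ref{prop:bridge}, the Remark on heads of single-headed bodies, and Theorem~\ref{thm:HydraUpperBound}.
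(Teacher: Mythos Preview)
Your approach to part~(i) and the ``if'' direction of~(ii) matches the paper's: non-stars contain a length-$3$ path whose middle edge invokes Proposition~\ref{prop:bridge}, and a caterpillar has a Hamiltonian line graph so Theorem~\ref{thm:HydraUpperBound} gives excess at most~$1$. (One small slip: your parenthetical construction for stars is off --- pointing each edge to the center is vacuous since the center already lies in every body; the intended construction has each edge $(c,\ell_i)$ point to the next leaf $\ell_{i+1}$ cyclically.)

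For the ``only if'' of~(ii) --- that a non-caterpillar tree has excess at least~$2$ --- you propose a direct local argument around the forbidden subdivided-$K_{1,3}$ subtree, with an admittedly unfinished case analysis. The paper takes a different and cleaner route: it simply invokes the general lower bound of Theorem~\ref{thm:LowerBoundTrees}, namely $\h(T)\ge|E(T)|+\lceil \ell(T)/2\rceil$. If $T$ is not a caterpillar then $T^{-}$ is a tree that is not a path, hence has at least three leaves, so $\ell(T)\ge 3$ and the bound yields excess $\ge 2$ immediately --- no case analysis needed. Your direct argument is viable in principle, but the bookkeeping you correctly flag as delicate (the $b_i$ need not be leaves of $T$, the single multi-headed body could sit anywhere relative to the branch vertex, heads of single-headed bodies can go to any neighbor) would essentially be re-deriving a special case of the injection argument in the proof of Theorem~\ref{thm:LowerBoundTrees}. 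The paper's approach buys modularity and sidesteps all of that.
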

\begin{proof}
First we prove the upper bounds. The single-headedness of stars is easily seen directly, or follows from Theorem~\ref{thm:HydraUpperBound}. For caterpillars, the upper bound follows from Theorem~\ref{thm:HydraUpperBound} as the line graph of a caterpillar contains a Hamiltonian path.

For the lower bounds, note that if a tree is not a star then it contains a path of length three. The middle edge of the path is a cut-edge between two components of at least two vertices, hence Proposition~\ref{prop:bridge} implies that the tree is not single-headed. The lower bound for non-star caterpillars follows from Theorem~\ref{thm:LowerBoundTrees} in the next subsection.
\end{proof}

\subsection{Lower Bounds for Trees}\label{subsect:LowerBoundTrees}

Let $T^-$ be the tree obtained from $T$ by removing all leaves, and define $\ell(T)$ to be the number of leaves in $T^-$. 
\begin{theorem}\label{thm:LowerBoundTrees}
For a tree $T$ that is not a star,
    \[h(T)\geq|E(T)|+\left\lceil\ell(T)/2\right\rceil.\]
\end{theorem}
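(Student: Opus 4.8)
The plan is to charge, to each leaf of $T^-$, at least one ``extra'' head beyond the trivial one-head-per-edge count, and to do this charging so that each extra head is counted at most twice. Let $H$ be any directed hypergraph representing $T$; since $h(T)\ge|E(T)|$ always holds, it suffices to exhibit a set of $\lceil\ell(T)/2\rceil$ hyperarcs of $H$ that are ``surplus,'' meaning their removal still leaves at least one head on every edge. Equivalently, I would find $\lceil \ell(T)/2\rceil$ multi-headed bodies, or more precisely account for $\lceil\ell(T)/2\rceil$ heads that exceed the edge count. The natural object to analyze is a leaf $a$ of $T^-$: in $T$ itself, $a$ has degree $\ge 2$, with exactly one neighbor $b$ lying in $T^-$ (the ``inner'' direction) and at least one neighbor that is a leaf of $T$ hanging off $a$. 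So $a$ sits at the base of a little star of pendant leaves, attached to the rest of the tree through the single edge $(a,b)$.

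The key local claim I would prove is: for each leaf $a$ of $T^-$, there is an edge incident to $a$ that is multi-headed in $H$, \emph{or} the edge $(a,b)$ toward the interior carries a head that is not a neighbor of $a$ or $b$ — and in either case one can identify a specific ``extra'' hyperarc that can be associated with $a$. Here is the mechanism. Let $p_1,\dots,p_t$ ($t\ge1$) be the pendant leaves of $T$ attached to $a$. Consider the edge $(a,p_1)$: since $cl_H(a,p_1)=V$ and $p_1$ has degree one, the forward chaining from $\{a,p_1\}$ must leave the ``bag'' $\{a,p_1\}$ through a hyperarc with body $\{a,p_1\}$ (any other hyperarc needs two marked vertices, and initially only $a,p_1$ are marked) whose head is a neighbor of $a$ or $p_1$, hence a neighbor of $a$ (as $p_1$'s only neighbor is $a$). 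Arguing similarly about $(a,p_j)$ for every $j$, and about which vertices can be the ``first new vertex'' reached, one sees that the edge $(a,b)$ and the edges $(a,p_j)$ collectively need more than $t+1$ heads among them unless the structure is very constrained — roughly, $a$ must reach $b$ quickly from each pendant, but it also must reach each $p_j$, and a single head per edge is not enough to do all of this while also escaping into the interior of the tree. I would make this precise by a short case analysis on the single head assigned to $(a,b)$ (if $(a,b)$ is single-headed) and on the heads of the pendant edges, showing a surplus hyperarc $\sigma(a)$ incident to the closed neighborhood of $a$ is forced; crucially $\sigma(a)$ lies ``at'' $a$, i.e., its body is one of the edges at $a$ or its head is in $N[a]$.

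Then I would show the charging is at most $2$-to-$1$: if $\sigma(a)=\sigma(a')$ for two distinct leaves $a,a'$ of $T^-$, then $a$ and $a'$ must be adjacent in $T^-$ (the shared surplus hyperarc forces its body or head to be incident to both $a$ and $a'$, which in a tree means $a,a'$ are neighbors and $\sigma$ is supported on the edge $(a,a')$ or an immediately incident edge), and a single hyperarc can be blamed on at most two vertices. Hence the number of surplus hyperarcs is at least $\ell(T)/2$, and since this number is an integer, at least $\lceil\ell(T)/2\rceil$, giving $h(T)=|F|\ge|E(T)|+\lceil\ell(T)/2\rceil$.

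The main obstacle I anticipate is the local claim: carefully establishing that each leaf of $T^-$ genuinely forces an extra head, rather than merely making this plausible. The subtlety is that a head placed on edge $(a,b)$ could be a vertex deep in the interior reached ``for free'' from other bodies, so one must track what forward chaining from each pendant pair $\{a,p_j\}$ can actually mark — and here the degree-one nature of $p_j$ is the lever that pins the first escaping hyperarc to the body $\{a,p_j\}$. One must also handle the boundary cases cleanly: $a$ adjacent to another leaf of $T^-$ (so the shared-$\sigma$ case is real and the $2$-to-$1$ bound is tight), $a$ with several pendant leaves, and $T^-$ itself being a single edge or a path (where $\ell(T)=2$). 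I would organize the argument so that the surplus hyperarc $\sigma(a)$ is chosen canonically — e.g., the lexicographically first multi-headed body meeting $N[a]$, or a designated ``non-local'' head — so that the overlap analysis in the second step is forced and mechanical rather than ad hoc.
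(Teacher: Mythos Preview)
Your high-level plan---charge each leaf of $T^-$ with an ``extra'' hyperarc and then count---is exactly the paper's strategy. But two of your intermediate claims fail as stated, and the paper's proof shows how to repair them.

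\textbf{The local claim is incomplete.} Your dichotomy is: either some edge incident to $a$ is multi-headed, or $(a,b)$ carries a head not adjacent to $a$ or $b$. The second option is vacuous: if $(a,b)$ were single-headed with such a head, forward chaining from $\{a,b\}$ would stall immediately, and if $(a,b)$ is multi-headed the first option already holds. What you are missing is the genuine third case: \emph{every} edge at $a$ (all pendant edges $(a,p_j)$ and the interior edge $(a,b)$) is single-headed. This can happen. In that case the leaves $p_j$ of $T$ must still be reached from bodies elsewhere, so some multi-headed body $(x,y)$ lying \emph{outside} the star at $a$ has a head $p_j\in N[a]$. Your definition of $\sigma(a)$ (``body at $a$ or head in $N[a]$'') accommodates this, but your stated local claim does not produce it.

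\textbf{The adjacency justification for $2$-to-$1$ is false.} Take the spider $T_4$ with center $u$ and four legs $u\!-\!v_i\!-\!w_i$; here $\ell(T_4)=4$ and the $v_i$ are pairwise non-adjacent in $T^-$. An optimal $H$ (from the line-graph path-cover construction) has exactly two multi-headed bodies, namely $(v_2,w_2)$ and $(v_4,w_4)$, each with two heads, so only two surplus hyperarcs exist. Hence $\sigma$ cannot be injective, and any collision $\sigma(v_i)=\sigma(v_j)$ involves non-adjacent $v_i,v_j$. Your argument ``shared $\sigma$ forces the body or head to be incident to both $a$ and $a'$, hence $a,a'$ are neighbors'' breaks because the body can be at $a$ while only the head lies in $N[a']$; that imposes no adjacency between $a$ and $a'$.

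\textbf{How the paper closes the gap.} Rather than a $2$-to-$1$ map to surplus hyperarcs, the paper builds an \emph{injection} $f$ from the $\ell$ leaves of $T^-$ to \emph{hyperarcs whose bodies are multi-headed}, via a three-step case analysis (multi-headed pendant edge at $v_i$; multi-headed interior edge $(u_i,v_i)$; all edges at $v_i$ single-headed, forcing an external multi-headed body with head among the pendant leaves). With $\ell$ such hyperarcs on $k$ distinct bodies, the excess is at least $k$ (one extra head per multi-headed body) and at least $\ell-k$ (those $k$ bodies carry $\ge\ell$ heads), so $\mathrm{excess}\ge\max(k,\ell-k)\ge\lceil\ell/2\rceil$. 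This $\max(k,\ell-k)$ step is what replaces your $2$-to-$1$ argument, and it does not require any adjacency between the leaves.
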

\begin{proof}

If $T$ is the tree consisting of a central edge $\edge{u}{v}$ and pendant edges attached to both $u$ and $v$, then $\edge{u}{v}$ is cut edge. By Proposition~\ref{prop:bridge}, $T$ is not single-headed and the bound $\h(T)\geq |E(T)|+\left\lceil\ell(T)/2\right\rceil=|E(T)|+1$ holds.

So we may assume that $T$ is not that tree. Suppose that $H$ is a hypergraph representing $T$. Let the leaves in $T^-$ be $v_1, \ldots, v_\ell$, where $\ell=\ell(T)$,
and let $u_i$ be the unique neighbor of $v_i$ in $T^-$. Note that by the assumption on $T$, no vertex $u_i$ can be identical to any of the leaves $v_j$.
Define $T_{v_i}$ to be the subtree of $T$ rooted at $v_i$, containing $v_i$ and its leaf neighbors in $T$.

We construct an injection $f$ by uniquely associating with each subtree $T_{v_i}$ a hyperarc of $H$ having a multi-headed body.
Let $\Heads\edge{u}{v}$ denote the set of heads of hyperarc in $H$ having body $(u,v)$.

\medskip

Step 1: Consider all subtrees $T_{v_i}$ containing a multi-headed body $\edge{v_i}{w}$ of $H$. If the set $\Heads\edge{v_i}{w}$ contains a vertex
$x\not\in \cup_{j\neq i} T_{v_j}$ then define 
\begin{equation}\label{eq:First}
f(i): =(\cl{v_i}{w}{x}).
\end{equation}

Assume that $\Heads\edge{v_i}{w}\subseteq \cup_{j\neq i} T_{v_j}$. The set $\{v_i,w\}\cup \Heads\edge{v_i}{w}$ must contain an edge other than $\edge{v_i}{w}$,
 otherwise forward chaining cannot mark any further vertices. Thus there must be at least one other subtree $T_{v_j}$ for $j\neq i$ containing two heads of $\edge{v_i}{w}$. Let the two heads be $x_1$ and $x_2$ and define 
 \begin{equation}\label{eq:Second}
 f(i): =(\cl{v_i}{w}{x_1}), \quad\quad\quad
 f(j): =(\cl{v_i}{w}{x_2}).
 \end{equation}
 
 The hyperarcs selected in this step are all distinct.

 \medskip

Step 2: Consider all subtrees $T_{v_i}$ for which $f$ is undefined after Step 1 such that $\edge{u_i}{v_i}$ is multi-headed in $H$.
Again, if the set $\Heads\edge{v_i}{w}$ contains a vertex  $x\not\in \cup_{j\neq i} T_{v_j}$ then define 
 \begin{equation}\label{eq:Third}
f(i): =(\cl{u_i}{v_i}{x}).
\end{equation}
Note that the body $(u_i, v_i)$ has not been used so far and so $f$ remains injective.

Assume that $\Heads\edge{u_i}{v_i}\subseteq \cup_{j\neq i} T_{v_j}$. The body $(u_i, v_i)$ has to reach the leaves of $T_{v_i}$, hence there must be hyperarcs
with body outside $T_{v_i}$ and having a leaf of $T_{v_i}$ as their head. The body of such a hyperarc must be multi-headed as otherwise forward chaining
could not continue. Let $\cl{x}{y}{w}$ be any such hyperarc and put 
 \begin{equation}\label{eq:Fourth}
f(i): =(\cl{x}{y}{w}).
\end{equation}
The partial mapping is still injective, as hyperarcs of type (\ref{eq:First}) and (\ref{eq:Third}) have heads outside the subtrees, and hyperarcs of type (\ref{eq:Second}) have heads
in subtrees for which $f$ is defined in Step 1.

\medskip

Step 3: Consider all subtrees $T_{v_i}$ for which $f$ is undefined after Steps 1 and 2. For these subtrees it holds that
edge $\edge{u_i}{v_i}$ and edges $\edge{v_i}{w}$, where $w$ is a head in $T_{v_i}$, are all single-headed in $H$.
Also, $\edge{u_i}{v_i}$ must have its only head outside $T_{v_i}$, else no edge in the subtree $T_{v_i}$ can reach any vertex outside $T_{v_i}$.
Indeed, if there is a hyperarc $\cl{v_i}{w}{x}$, where $w$ is a leaf of $T_{v_i}$ and $x \not\in T_{v_i}$, then the body $(v_i, w)$ can either reach $u_i$ only
(if $x = u_i$) or no vertex other than its head.

Repeating the argument in the second half of Step 2, there must be a multi-headed hyperarc
$\cl{x}{y}{w}$ with body outside $T_{v_i}$ and having a leaf of $T_{v_i}$ as its head.
Then we define 
\begin{equation}\label{eq:Fifth}
f(i): =\cl{x}{y}{w}.
\end{equation}
The partial mapping is still injective, as hyperarcs of type~(\ref{eq:First}) and~(\ref{eq:Third}) have heads outside the subtrees, and hyperarcs of type~(\ref{eq:Second}) and~(\ref{eq:Fourth}) have heads
in subtrees for which $f$ is defined in Steps 1 and 2.

\medskip

Thus $H$ contains at least $\ell$ hyperarcs having multi-headed bodies. Let these hyperarcs have $k$ different bodies.
Then the excess of $H$ is at least $k$, as every multi-headed body contributes at least one to the excess,
and at least $\ell - k$, as the $k$ edges altogether contribute at least $\ell - k$ to the excess. Thus the excess is
at least $\max(k, \ell - k) \ge \lceil \ell / 2 \rceil$.
\end{proof}

Theorem~\ref{thm:LowerBoundTrees} can also be formulated for general graphs containing outer induced subgraphs that are trees (i.e., graphs that contain some non-leaf vertices that become leaves after all their neighbors of a degree-1 are deleted). The proof is essentially the same.

\begin{corollary}
\label{thm:LowerBoundGeneral}
Let $G^-$ be the graph obtained from $G$ by removing all degree-1 vertices, and define $\ell(G)$ to be the number of degree-1 vertices in $G^-$. If $\ell(G)>1$ then
\[\h(G)\geq |E(G)|+\left\lceil{\ell(G)}/{2}\right\rceil.\]
\end{corollary}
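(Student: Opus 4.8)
\textbf{Proof proposal for Corollary~\ref{thm:LowerBoundGeneral}.}
The plan is to adapt the argument of Theorem~\ref{thm:LowerBoundTrees} almost verbatim, replacing the tree $T$ by the general graph $G$ and the tree $T^-$ by $G^-$. First I would set up the same data: let $v_1,\dots,v_\ell$ be the degree-1 vertices of $G^-$ (so $\ell = \ell(G) > 1$), let $u_i$ be a neighbor of $v_i$ in $G^-$, and for each $i$ let $T_{v_i}$ be the star consisting of $v_i$ together with all its degree-1 neighbors in $G$. The key structural observations that made the tree proof work are purely local, and I would check that they survive: the sets $T_{v_i}$ are vertex-disjoint (since the $v_i$ are distinct non-adjacent vertices of $G^-$ and each leaf of $G$ has a unique neighbor), no $u_i$ equals any $v_j$ (if $\ell(G)>1$ this holds unless $G$ is the ``central edge with pendants'' graph, which I would dispatch separately exactly as in the theorem via Proposition~\ref{prop:bridge}), and each $v_i$ has degree at least two in $G$ with all-but-possibly-one of its incident edges going to leaves.

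Next I would run the three-step construction of the injection $f$ from subtrees $T_{v_i}$ to multi-headed hyperarcs of a representing hypergraph $H$, checking at each step that the only facts used are: (a) a single-headed body $(a,b)$ with head $c$ forces $c$ to be adjacent to $a$ or $b$ (the Remark after Proposition~\ref{prop:hamsin}); (b) if a body cannot be extended by forward chaining, its reachable set is too small; (c) to reach a leaf of $T_{v_i}$ from outside $T_{v_i}$ one needs a hyperarc whose head is that leaf and whose body is therefore multi-headed (since the body is not an edge incident to the leaf). None of these uses acyclicity of $G$ globally — they only use that the $T_{v_i}$ are ``pendant stars'' hanging off $G^-$, so the disjointness and the $\cup_{j\ne i}T_{v_j}$ bookkeeping go through unchanged. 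The final counting is identical: $H$ has at least $\ell$ hyperarcs with multi-headed bodies, spread over $k$ distinct bodies, giving excess at least $\max(k,\ell-k)\ge\lceil\ell/2\rceil$.

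The one place that genuinely needs care — and the step I expect to be the main obstacle — is the claim in Step~3 that ``$\edge{u_i}{v_i}$ must have its only head outside $T_{v_i}$, else no edge in the subtree $T_{v_i}$ can reach any vertex outside $T_{v_i}$.'' In the tree case this is immediate because deleting $T_{v_i}$ disconnects the rest of $T$ from $T_{v_i}$ except through $u_i$. In a general graph $G$, the vertex $v_i$ has degree $\ge 2$ in $G^-$-land only through $u_i$? No — $v_i$ is a leaf of $G^-$, so in $G$ its non-leaf neighbors are exactly $\{u_i\}$; all its other $G$-neighbors are leaves of $G$, hence in $T_{v_i}$. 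So the ``cut'' structure around $T_{v_i}$ is the same as in a tree: the only edge leaving $T_{v_i}$ is $(u_i,v_i)$. This means the argument does in fact survive, but I would want to state this explicitly as a lemma (``$T_{v_i}$ meets the rest of $G$ only along the edge $(u_i,v_i)$'') before reusing Steps~1--3, since it is the linchpin that lets the entire tree proof be quoted. With that lemma in hand the corollary follows, as the excerpt says, by a proof ``essentially the same'' as Theorem~\ref{thm:LowerBoundTrees}.
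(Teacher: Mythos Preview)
Your proposal is correct and follows exactly the approach the paper intends: the paper offers no separate argument for the corollary, remarking only that ``the proof is essentially the same'' as Theorem~\ref{thm:LowerBoundTrees}, and your plan spells out precisely what that means, including the one genuinely new observation (that $T_{v_i}$ meets the rest of $G$ only along $(u_i,v_i)$ because $v_i$ has a unique non-leaf neighbor in $G$). Your explicit isolation of that local cut property is a welcome clarification, and the rest of Steps~1--3 and the final $\max(k,\ell-k)$ count carry over verbatim as you describe.
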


\section{Special classes of trees}\label{sect:specialTrees}
\subsection{Spiders}\label{subsect:Spiders}

A \emph{spider} is a tree with at most one vertex of degree at greater than~2.
Let $u$ be the vertex with degree at least 3. For convenience, we call the unique path from a leaf of a spider to $u$, including the endpoints, a \emph{leg}. The upper bound from Theorem~\ref{thm:HydraUpperBound} is sharp for spider trees (even if we require the subgraph $G'$ to be connected) and so is the lower bound from Theorem~\ref{thm:LowerBoundTrees}.

\begin{corollary}\label{cor:SpiderTreeHydraNumber}
If $T$ is a spider tree with $\ell$ legs of length at least two, then
\[ \h(T)=|E(T)| + \left\lceil \ell / 2 \right\rceil. \]
\end{corollary}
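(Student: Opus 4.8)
The idea is to sandwich $\h(T)$ between the lower bound already available and a matching upper bound obtained from Theorem~\ref{thm:HydraUpperBound}. First observe that for a spider $T$ with $\ell$ legs of length at least two and possibly some legs of length one (pendant edges at the center $u$), the reduced tree $T^-$ obtained by deleting all leaves is again a spider whose legs are exactly the legs of $T$ of length $\ge 2$ shortened by one; hence every such leg contributes exactly one leaf to $T^-$, so $\ell(T)=\ell$ when $\ell\ge 2$, and Theorem~\ref{thm:LowerBoundTrees} immediately gives $\h(T)\ge |E(T)|+\lceil\ell/2\rceil$. (The degenerate case $\ell\le 1$, i.e.\ $T$ is a path or a star, is handled separately using Theorem~\ref{thm:caterpillars} and Proposition~\ref{prop:easyBounds}, since a spider with at most one leg of length $\ge 2$ is a caterpillar.)

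For the upper bound I would exhibit a spanning subgraph $G'$ of $T$ with no isolated vertices whose line graph has a path cover of size $\lceil\ell/2\rceil$; then Theorem~\ref{thm:HydraUpperBound} yields $\h(T)\le |E(T)|+\lceil\ell/2\rceil$. Take $G'=T$ itself. The line graph $L(T)$ of a spider consists of a clique on the edges incident to the center $u$, with a pendant path hanging off each of those clique vertices (the path corresponding to the edges further out along each leg). A path in $L(T)$ can traverse one such pendant path, pass through the central clique, and continue down a second pendant path; thus a single path in the cover can ``use up'' two legs. Pairing the $\ell$ legs of length $\ge 2$ into $\lfloor\ell/2\rfloor$ pairs (plus one leftover leg if $\ell$ is odd), each pair becomes one path of $L(T)$, and the pendant edges at $u$ of length one can be absorbed into the central clique portion of these paths; this gives a path cover of $L(T)$ of size exactly $\lceil\ell/2\rceil$. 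I would spell out the construction by ordering the legs $\ell_1,\dots,\ell_\ell$, writing each leg's edge sequence, and describing the directed path that goes up leg $\ell_{2j-1}$ to $u$ and down leg $\ell_{2j}$, checking that every edge of $T$ lies on exactly one such path.

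The main obstacle is verifying that the path cover number of $L(T)$ is not smaller than $\lceil\ell/2\rceil$ would-be-needed claim — but in fact we do not need that, since the lower bound on $\h(T)$ already comes from Theorem~\ref{thm:LowerBoundTrees}, not from $\pcn(L(T))$; so the only real work is the explicit path-cover construction and the bookkeeping in the odd/small-$\ell$ cases. A secondary point to get right is the parenthetical strengthening ``even if we require $G'$ to be connected'': since we are taking $G'=T$, which is connected, this is automatic, and I would simply remark that the construction uses the full tree. One should also double-check the edge cases where some legs have length exactly one versus where all legs have length $\ge 2$, and confirm that in all of them $\ell(T)=\ell$ so that the lower and upper bounds coincide, completing the proof.
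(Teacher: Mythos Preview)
Your proposal is correct and follows essentially the same approach as the paper: the lower bound comes from Theorem~\ref{thm:LowerBoundTrees} via the observation that $\ell(T)=\ell$, and the upper bound comes from Theorem~\ref{thm:HydraUpperBound} by covering $L(T)$ with $\lceil\ell/2\rceil$ paths obtained by pairing up the legs. Your version is simply more detailed---you spell out the handling of length-one legs and the degenerate cases $\ell\le 1$, which the paper's two-sentence proof leaves implicit.
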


\begin{proof}
The upper bound is a corollary of Theorem~\ref{thm:HydraUpperBound} 
as $L(T)$ can be covered with $\left\lceil \ell/2\right\rceil$ paths obtained from pairing up the legs of the tree and taking their line graphs.
 The lower bound is a corollary of Theorem~\ref{thm:LowerBoundTrees}, since $\ell=\ell(T)$.
\end{proof}

Let $T_k$ to be the spider tree with $k$ legs of length exactly 2.

\begin{theorem}\label{thm:maxHydraNumberOfTrees}
The maximum hydra number of a tree $T$ is $\left\lfloor(5|V(T)|-3)/4\right\rfloor$, and this is attained by the trees $T_k$.
\end{theorem}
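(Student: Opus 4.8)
The plan is to prove the theorem in two independent halves: an upper bound $\h(T) \le \lfloor (5|V(T)|-3)/4 \rfloor$ valid for \emph{every} tree $T$, and the matching lower bound realized by the spiders $T_k$. For the lower bound I would first compute $\h(T_k)$ exactly. The spider $T_k$ has $2k+1$ vertices and $2k$ edges, and the tree $T_k^-$ obtained by deleting the $k$ leaves is itself a star with $k$ edges, so its $k$ leaves give $\ell(T_k) = k$. Hence Corollary~\ref{cor:SpiderTreeHydraNumber} (or directly Theorem~\ref{thm:LowerBoundTrees} together with Theorem~\ref{thm:HydraUpperBound}) yields $\h(T_k) = 2k + \lceil k/2 \rceil$. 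Writing $n = |V(T_k)| = 2k+1$, I would check that $2k + \lceil k/2 \rceil = \lfloor (5n-3)/4 \rfloor$ by a short case split on the parity of $k$: if $k = 2m$ then the value is $4m + m = 5m$ and $n = 4m+1$, so $(5n-3)/4 = (20m+2)/4 = 5m + \tfrac12$, whose floor is $5m$; if $k = 2m+1$ then the value is $4m+2 + (m+1) = 5m+3$ and $n = 4m+3$, so $(5n-3)/4 = (20m+12)/4 = 5m+3$ exactly. So the spiders attain the claimed bound.

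The substantive part is the upper bound for an arbitrary tree. The natural route is via Theorem~\ref{thm:HydraUpperBound}: it suffices to exhibit, for any tree $T$, a spanning subgraph $G'$ (here $T$ itself, since a tree has no nontrivial spanning subgraph without isolated vertices — but we may also just work with the line graph of $T$ directly) with $\pcn(L(T)) \le \lfloor (5|V(T)|-3)/4 \rfloor - |E(T)| = \lfloor (5n-3)/4 \rfloor - (n-1) = \lfloor (n+1)/4 \rfloor$. So the claim reduces to the purely structural statement: \emph{the line graph of any tree on $n$ vertices has a vertex-disjoint path cover with at most $\lfloor (n+1)/4 \rfloor$ paths.} This is exactly the Hamiltonian-completion/minimum-path-cover quantity for line graphs of trees studied in the references cited in the paper (Raychaudhuri, Agnetis et al.), and I would either invoke their bound or prove the needed estimate directly. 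The direct argument I would give: root $T$ and process it bottom-up, greedily extending paths in $L(T)$ along the edges of $T$; each time a vertex $u$ of $T$ has $d$ children-edges it can ``merge'' incoming partial paths in pairs through $u$, so a new path is forced only for roughly every other leaf hanging below a branch vertex. Quantitatively, one shows the number of paths is controlled by $\lceil \ell(T)/2 \rceil$ plus a correction, and then bounds $\ell(T) = $ number of leaves of $T^-$ in terms of $n$. The worst case — maximizing the number of forced path-starts per vertex — is precisely a star of ``cherries'', i.e.\ $T_k$, where every one of the $k$ length-2 legs forces its own contribution and no two can be paired for free beyond $\lceil k/2 \rceil$ of them.

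Concretely I would structure the upper-bound argument as: (1) reduce to the path-cover statement for $L(T)$ via Theorem~\ref{thm:HydraUpperBound} and arithmetic; (2) prove $\pcn(L(T)) \le \lceil \ell(T)/2 \rceil$ when $T$ is not a star, by pairing up the pendant-subtrees $T_{v_i}$ (as in the proof of Corollary~\ref{cor:SpiderTreeHydraNumber}) and covering the ``core'' $L(T^-)$-part together with one leg of each pair in a single path, analogously to the spider construction; (3) bound $\ell(T) \le \lfloor (n+1)/2 \rfloor$ or the sharper inequality needed — each leaf of $T^-$ is a vertex of $T^-$ adjacent to at least one leaf of $T$ that was removed, and a counting argument on $|V(T^-)| \le n - \ell(T)$ combined with $\ell(T) \le |V(T^-)|$ gives $\ell(T) \le n/2$, refined to match $\lfloor(n+1)/4\rfloor$ after the $\lceil \cdot /2\rceil$; then assemble. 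The main obstacle I anticipate is step (2)–(3): getting the path-cover bound on $L(T)$ tight enough to hit $\lfloor (n+1)/4 \rfloor$ rather than a weaker constant, and in particular handling trees that are ``between'' caterpillars and spiders — trees with several branch vertices — where one must argue that long legs and internal paths can always be absorbed into existing paths without spawning new ones, so that only the branching structure (captured by $\ell(T)$) costs anything. Verifying that no tree does worse than $T_k$, i.e.\ that the extremal configuration is a disjoint union of cherries around a single hub, is the crux; I would prove it by showing any tree can be transformed, without decreasing $\lfloor(5n-3)/4\rfloor - \h$, toward such a configuration, or by a direct inductive bound $\pcn(L(T)) \le \lfloor(|V(T)|+1)/4\rfloor$ on the number of vertices.
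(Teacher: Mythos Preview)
Your lower-bound computation for $T_k$ is correct and is exactly what the paper does. Your reduction of the upper bound to the inequality $\pcn(L(T)) \le \lfloor (n+1)/4\rfloor$ is also sound, and in fact is equivalent to the paper's route: since trees are triangle-free, the identity $\tau(T) = |E(T)| + \pcn(L(T))$ from Kratzke--West / Raychaudhuri means your target inequality is precisely the statement that $\tau(T) \le \lfloor(5n-3)/4\rfloor$. The paper simply cites this bound (Andreae--Aigner, 1989) and is done via Corollary~\ref{cor:TotalIntervalNumber}.

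Where your proposal has a genuine gap is in the attempted direct proof of that inequality. Your step~(2), the claim $\pcn(L(T)) \le \lceil \ell(T)/2 \rceil$ for all non-star trees, is \emph{false}. Take $T = B_4$, the complete binary tree of depth~4: here $T^- = B_3$ has $\ell(T) = 8$ leaves, so your bound would give $\pcn(L(B_4)) \le 4$, but Theorem~\ref{thm:BinaryTreePaths} computes $\pcn(L(B_4)) = \lceil 30/7 \rceil = 5$. The spider argument from Corollary~\ref{cor:SpiderTreeHydraNumber} does not generalize: in a spider the legs can be paired through the single hub, but in a tree with many branch vertices the path cover of $L(T)$ is governed by the full branching structure, not just by the leaves of $T^-$. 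So steps~(2)--(3) as written cannot close the argument; to make your approach self-contained you would need a different combinatorial proof of $\pcn(L(T)) \le \lfloor (n+1)/4 \rfloor$, which is essentially the Andreae--Aigner theorem itself.
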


\begin{proof}

The maximum interval number of tree is $\left\lfloor(5|V(G)|-3)/4\right\rfloor$ (shown in~\cite{AndreaeAi1989}). By Corollary~\ref{cor:TotalIntervalNumber} $\h(T)\leq\tau(T)$ and the same upper bound is established for the maximum interval number of a tree.

On the other hand recall that $T_k$ is a spider tree with $k$ legs of length 2, $|V(T_k)|=2k+1$, $|E(T_k)|=2k$ and hydra number
{$\h(T_k)=|E(T_k)|+\left\lceil{k}/{2}\right\rceil = \left\lfloor(5|V(T_k)|-3)/4\right\rfloor.$}
\end{proof}

\begin{remark}Recall that Proposition~\ref{prop:SpanningSubgraph} implies that any one edge added to $T_k$ can increase the hydra number by at most one, and cannot increase the \emph{excess} of the hydra number, the quantity ($\h(G)-|E(G)|$). Every connected graph contains a spanning tree, thus the maximum value of the excess $\h(G)-|E(G)|$ for any connected graph $G$ is attained by the tree $T_k$.\end{remark}

In the discussion of the hydra number we considered single-headed graphs, which imposes a restriction on the \emph{outdegrees} of hyperarcs.
Now we turn to the discussion of \emph{indegrees} and consider the largest in-degree of a vertex in an optimal $H$. 
The following results show that the trees $T_k$ introduced above have interesting properties with respect to the indegree of their
central vertex: that vertex has large indegree in every optimal representation, and bounding its indegree results in a blow-up of the
size of its representations.

\begin{theorem}\label{thm:SpiderTreeInDegree}
The central vertex $u$ is the head of at least $\lfloor k/2 \rfloor$ hyperarcs in every optimal hydra $H$ representing $T_k$.
\end{theorem}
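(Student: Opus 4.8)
The plan is to fix an optimal hydra $H$ representing $T_k$ and count the hyperarcs whose head is the central vertex $u$, by arguing that the $k$ legs cannot all be "served" cheaply without pointing many hyperarcs at $u$. Recall that $T_k$ has legs $u,a_i,b_i$ for $1\le i\le k$, with $a_i$ the middle vertex and $b_i$ the leaf of leg $i$. From Corollary~\ref{cor:SpiderTreeHydraNumber} we know $\h(T_k)=|E(T_k)|+\lceil k/2\rceil=2k+\lceil k/2\rceil$, so $H$ has exactly $\lceil k/2\rceil$ multi-headed bodies' worth of excess to spend. First I would analyze, leg by leg, how the body $(u,a_i)$ and the body $(a_i,b_i)$ must behave. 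The leaf $b_i$ has degree one in $T_k$, so the only edge it lies on is $(a_i,b_i)$; hence the pair $\{a_i,b_i\}$ must reach the rest of the tree, and since $a_i$ and $b_i$ have no neighbor outside $\{u,a_i,b_i\}$, the body $(a_i,b_i)$ can be single-headed only if its head is $u$ (pointing to $v_i$'s neighbor $u$), and even then the pair $(a_i,b_i)$ reaches only $\{u,a_i,b_i\}$ unless some other body fires once $u$ is marked — which requires $(u,a_i)$ or $(u,a_j)$-type bodies to be present and firable. Symmetrically, the body $(u,a_i)$ must reach $b_i$, so either $(u,a_i)$ itself has head $b_i$, or $b_i$ is reached by some body that becomes active after $(u,a_i)$ fires; the natural such body is $(a_i,b_i)$, but that one fires only once both $a_i$ and $b_i$ are marked, which is circular. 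Disentangling these dependencies is the crux.

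The key structural claim I would isolate is: for each leg $i$, at least one of the hyperarcs "dedicated" to making leg $i$ work must have head $u$ or be a multi-headed body, and moreover the multi-headed bodies can absorb at most two legs each only if they point into legs, whereas the clean way to handle a leg is precisely a hyperarc with head $u$. Concretely, I would define, for each $i$, whether leg $i$ is "$u$-headed" (some hyperarc $(a_i,b_i)\to u$ or $(u,a_i)\to u$ or more generally a hyperarc with body inside leg $i$ and head $u$ lies in $H$) and show that a leg that is not $u$-headed forces a multi-headed body among the hyperarcs touching $\{u,a_i,b_i\}$, in the spirit of the argument in the proof of Theorem~\ref{thm:LowerBoundTrees}. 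Since the total excess is only $\lceil k/2\rceil$, at most $\lceil k/2\rceil$ bodies are multi-headed, and each multi-headed body can be "blamed" by at most two legs (because its at-most-a-handful of heads can lie in at most two of the pairwise-disjoint sets $\{a_j,b_j\}$, essentially the pairing argument already used), so at most $2\lceil k/2\rceil$... which is too weak. The fix: combine the multi-headed-body count with the observation that a leg handled via a multi-headed body whose head lies in leg $j\ne i$ still ultimately needs $u$ to be marked by the pair $(a_i,b_i)$ or $(u,a_i)$, and the only way to mark $u$ from inside leg $i$ using one hyperarc is a hyperarc with head $u$. So I would argue: the set $L$ of legs that are not $u$-headed must have every pair $(u,a_i)$, $i\in L$, be the body of a hyperarc that does \emph{not} have head $u$; tracing how $u$ nonetheless gets marked from $\{u,a_i,b_i\}$ and how $b_i$ gets marked from $\{u,a_i\}$ forces, for each such $i$, a multi-headed body, and a careful accounting (each multi-headed body chargeable to at most two legs in $L$) gives $|L|\le 2\cdot(\text{number of multi-headed bodies})$; but the number of multi-headed bodies that can be so charged is bounded by the excess $\lceil k/2\rceil$ minus the contribution of the $u$-headed legs, leading to $|L| + (\text{\# $u$-headed legs}) \ge k$ together with the excess bound yielding $k - |L| \ge \lfloor k/2\rfloor$, i.e. at least $\lfloor k/2\rfloor$ legs are $u$-headed, hence $u$ is the head of at least $\lfloor k/2\rfloor$ hyperarcs (after checking that distinct $u$-headed legs contribute distinct hyperarcs with head $u$, which holds because the bodies $(a_i,b_i)$ and $(u,a_i)$ are distinct across legs).

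I expect the main obstacle to be the bookkeeping that turns "many legs need a multi-headed body and the total excess is small" into the exact bound $\lfloor k/2\rfloor$ rather than something off by a constant factor; in particular one must be careful that a single multi-headed body is not double-charged, and that the two regimes ("$u$-headed" legs, costing a hyperarc into $u$, versus "multi-headed" legs, costing excess) are balanced correctly against the known total excess $\lceil k/2\rceil$. A secondary subtlety is handling the forward-chaining dependency that, even when $u$ is marked, firing further hyperarcs may require the cooperation of several legs simultaneously, so the "leg $i$ in isolation" picture must be justified — I would do this by observing that in the forward chaining from $\{a_i,b_i\}$ or $\{u,a_i\}$, the first newly marked vertex outside $\{u,a_i,b_i\}$ is necessarily the head of a hyperarc whose body lies in $\{u,a_i,b_i\}$, which constrains that hyperarc's head and forces either a head equal to $u$ or a multi-headed body, exactly as in the cut-edge arguments of Proposition~\ref{prop:bridge} and Theorem~\ref{thm:LowerBoundTrees}.
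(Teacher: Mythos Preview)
You have already written down the two facts that finish the proof in three lines, but you do not combine them directly and instead launch into an elaborate charging scheme that you yourself flag as shaky. Specifically, you observe that (i) a single-headed body $(a_i,b_i)$ must have head $u$ (since $u$ is the only neighbour of $a_i$ or $b_i$ other than themselves), and (ii) the total excess of an optimal $H$ is $\lceil k/2\rceil$, so at most $\lceil k/2\rceil$ bodies are multi-headed. Now simply apply (ii) to the $k$ bodies $(a_i,b_i)$: at most $\lceil k/2\rceil$ of them are multi-headed, hence at least $k-\lceil k/2\rceil=\lfloor k/2\rfloor$ of them are single-headed, and by (i) each of those contributes a distinct hyperarc with head $u$. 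That is the entire argument in the paper.

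Your proposed route---defining ``$u$-headed'' legs, blaming multi-headed bodies for non-$u$-headed legs, worrying that a multi-headed body might be charged by two legs, then trying to repair this---is unnecessary, and as you note it first gives a bound that is too weak by a factor of two. The repair you sketch is essentially circling back to the simple count above (a non-$u$-headed leg forces its own $(a_i,b_i)$ to be multi-headed, and these bodies are distinct across legs), but obscured by extra machinery. There is no need for the leg-by-leg forward-chaining analysis, the pairing of multi-headed bodies with legs, or the ``secondary subtlety'' about cooperation of legs: the only structural fact required is the single-headed-neighbour observation, applied to the leaf edges $(a_i,b_i)$ alone.
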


\begin{proof}
Let $u$ be the central vertex of $T_k$, $v_i$ be one of the $k=\ell(T)$ neighbors of $u$, and $w_i$ be the leaf neighbor of $v_i$. Consider an optimal hypergraph representing $T_k$ with $|E(T_k)|+\lceil k/2 \rceil$ hyperarcs. Either $\edge{v_i}{w_i}$ is single-headed in $H$, in which case $u$ must be its head, or it is multi-headed in $H$. By the optimality of $H$ exactly at most $\lceil k/2\rceil$ of these edges is multi-headed, and thus at least $\lfloor k/2\rfloor$ of them must be single-headed.
\end{proof}

The trade-off between the indegree of the center and the size of representations can be formulated as follows.

\begin{theorem}\label{thm:SpiderTreeRestrictedInDegree}
For any $k$ and any $d$, such that $1 \leq d \leq \lfloor k/2 \rfloor$, if $H$ is a hypergraph representing $T_k$, such that the center $u$ is a head of at most $d$ hyperarcs, then the number of hyperarcs in $H$ is at least $|E(T_k)|+k-d$.
\end{theorem}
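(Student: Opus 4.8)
The plan is to fix an optimal-for-this-constraint hypergraph $H$ representing $T_k$ in which the center $u$ is the head of at most $d$ hyperarcs, and to count the excess $|F| - |E(T_k)|$ by examining the $k$ leaf edges $(v_i, w_i)$, where $v_i$ is a neighbor of $u$ and $w_i$ is its leaf neighbor. As in the proof of Theorem~\ref{thm:SpiderTreeInDegree}, each leaf edge $(v_i,w_i)$ is either single-headed or multi-headed in $H$. If $(v_i,w_i)$ is single-headed, then its unique head must be a neighbor of $v_i$ or $w_i$, hence must be $u$ (the only other vertex adjacent to $v_i$, since $w_i$ has no neighbor besides $v_i$); in particular each such single-headed leaf edge contributes one hyperarc whose head is $u$. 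By hypothesis at most $d$ hyperarcs have head $u$, so at most $d$ of the $k$ leaf edges are single-headed, and therefore at least $k-d$ of them are multi-headed.

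The next step is to convert ``at least $k-d$ multi-headed leaf edges'' into ``excess at least $k-d$.'' Each multi-headed body contributes at least one unit to the excess $|F|-|E(T_k)|$ all by itself (it accounts for at least two hyperarcs on a single edge), and these $k-d$ leaf edges $(v_i,w_i)$ are pairwise distinct edges of $T_k$, so they contribute at least $k-d$ distinct units of excess. Hence $|F| \geq |E(T_k)| + (k-d)$, which is exactly the claimed bound. Note the constraint $d \leq \lfloor k/2 \rfloor$ is what makes the conclusion nontrivial (for larger $d$ the bound is weaker than the trivial $|F|\geq |E(T_k)|$), but the argument itself does not really need it.

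I would present this cleanly: first recall the ``single-headed body forces its head to be a neighbor'' remark to pin down that a single-headed $(v_i,w_i)$ has head $u$; then apply the indegree bound to cap the number of such edges at $d$; then lower-bound the excess by the number of multi-headed leaf edges. The only point requiring a little care is the very first one --- making sure that a single-headed leaf edge's head is necessarily the center $u$ and not, say, $v_i$ or $w_i$ themselves (which would be the degenerate hyperarc $(v_i,w_i\to v_i)$, pointless for reachability) --- but since $|V(T_k)| = 2k+1 \geq 5$ for the relevant range of $k$, the cited remark applies and the head of a single-headed body must be an actual neighbor, forcing it to be $u$. I do not anticipate a genuine obstacle here; the theorem is essentially a refined bookkeeping version of Theorem~\ref{thm:SpiderTreeInDegree}, and the main thing is to phrase the excess-counting so that distinctness of the $k-d$ leaf edges is transparent.
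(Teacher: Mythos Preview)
Your proof is correct and follows essentially the same argument as the paper's: observe that a single-headed leaf edge $(v_i,w_i)$ must have head $u$, bound the number of such edges by $d$, and conclude that at least $k-d$ leaf edges are multi-headed, giving excess at least $k-d$. The paper's version is terser but identical in substance.
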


\begin{proof}
As before, each single-headed body $\edge{v_i}{w_i}$ must have $u$ as its head, thus at most $d$ of these bodies are single-headed. Therefore at least $k-d$ of them are multi-headed and so the excess is at least $k-d$.
\end{proof}

\subsection{Complete Binary Trees}\label{subsubsect:BinaryTrees}

In this section we obtain upper and lower bounds for $\h(G)$ when $G$ is a complete binary tree. A \emph{complete binary tree} of depth $d$, denoted $B_d$, is a tree with $d+1$ levels, where every node on levels 1 through $d$ has exactly 2 children. $B_d$ has $2^{d+1}-1$ vertices and $2^{d+1}-2$ edges.

\begin{theorem}\label{thm:BinaryTreeBounds}
For $d \ge 3$ it holds that
\begin{equation}
\frac{9}{8}\left|E\left(B_{d}\right)\right|
\leq \h\left(B_{d}\right)\leq \frac{17}{15}\left|E\left(B_{d}\right)\right|+1.
\end{equation}
\end{theorem}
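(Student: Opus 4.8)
The plan is to establish the two bounds separately, since they come from quite different sources. For the upper bound $\h(B_d)\le \frac{17}{15}|E(B_d)|+1$, I would use Proposition~\ref{prop:SpanningSubgraph} together with Theorem~\ref{thm:HydraUpperBound}: it suffices to exhibit a spanning subgraph $G'$ of $B_d$ with no isolated vertices whose line graph has small path cover number, or better, to directly decompose $B_d$ into vertex-disjoint pieces each of which has a cheap hydra representation and then pay one extra hyperarc to stitch the pieces together cyclically (as in the disconnected-graph remark following Proposition~\ref{conj:components}). The natural unit is a small fixed-depth subtree — for instance a copy of $B_3$ or a ``bottom gadget'' consisting of a vertex with its two children and their four grandchildren — that is a non-star caterpillar or close to it, hence has excess $1$ (or a small constant) per gadget by Theorem~\ref{thm:caterpillars}. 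Tiling $B_d$ greedily from the leaves upward with such gadgets, one counts how many gadgets are needed relative to $|E(B_d)|$; the ratio $17/15$ strongly suggests a gadget on $15$ edges contributing excess $2$ (so $\h = |E|+2$ on $15$ edges, i.e.\ $\frac{17}{15}|E|$), with the trailing $+1$ absorbing the cost of joining the $\Theta(|E|)$ pieces circularly and any leftover edges near the root that don't fill a whole gadget. I would make the tiling precise: partition the edge set of $B_d$ into groups of $15$ consecutive edges forming a caterpillar-like subtree plus a handful of boundary edges, invoke Theorem~\ref{thm:caterpillars}/Theorem~\ref{thm:HydraUpperBound} on each, and add $2$ hyperarcs per component pointing to the next component in a fixed circular order, then sum.

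For the lower bound $\h(B_d)\ge \frac98|E(B_d)|$, the tool is Corollary~\ref{thm:LowerBoundGeneral} (the graph version of Theorem~\ref{thm:LowerBoundTrees}), but applied more cleverly than just once to $B_d$ itself: a single application only gives excess $\lceil\ell(B_d)/2\rceil$, where $\ell(B_d)$ is the number of leaves of $B_{d-1}$, namely $2^{d-1}$, which is about $|E|/4$ and would only give $\h\ge \frac98|E|$ — so in fact one application to the whole tree might already suffice. Let me recompute: $|E(B_d)|=2^{d+1}-2$, the tree $B_d^-$ obtained by deleting leaves is $B_{d-1}$, which has $2^{d-1}$ leaves, so $\ell(B_d)=2^{d-1}$ and Theorem~\ref{thm:LowerBoundTrees} gives $\h(B_d)\ge |E(B_d)|+2^{d-2}=2^{d+1}-2+2^{d-2}$. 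Dividing, $\frac{2^{d+1}-2+2^{d-2}}{2^{d+1}-2}=1+\frac{2^{d-2}}{2^{d+1}-2}\to 1+\tfrac18$, and one checks that for $d\ge 3$ this is $\ge \frac98$ exactly when $2^{d-2}\cdot 8\ge 2^{d+1}-2$, i.e.\ $2^{d+1}\ge 2^{d+1}-2$, which always holds. So the lower bound is a clean one-line consequence of Theorem~\ref{thm:LowerBoundTrees}, after verifying $B_d$ is not a star and the ``central edge with pendants'' exception does not apply (true for $d\ge 3$).

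Putting it together, the write-up would be: (i) state $|E(B_d)|=2^{d+1}-2$ and $\ell(B_d)=2^{d-1}$; (ii) apply Theorem~\ref{thm:LowerBoundTrees} to get $\h(B_d)\ge 2^{d+1}-2+2^{d-2}\ge \frac98(2^{d+1}-2)=\frac98|E(B_d)|$, with the inequality $2^{d-2}\ge \frac18(2^{d+1}-2)$ checked directly; (iii) for the upper bound, describe the leaf-to-root tiling of $B_d$ into $\Theta(2^d)$ vertex-disjoint caterpillar gadgets of $15$ edges each with a constant number of residual edges near the root, apply Theorem~\ref{thm:HydraUpperBound} and Theorem~\ref{thm:caterpillars} to each gadget to represent it with excess at most $2$, connect the gadgets in a circular order at the cost of $2$ extra hyperarcs per gadget — wait, that would cost $\Theta(2^d)$ extra, too much.

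I need to reconsider the upper bound accounting: the two-hyperarcs-per-component trick costs $2$ per component, and with $\Theta(2^d)$ components that is $\Theta(2^d)=\Theta(|E|)$, far more than the allowed additive $+1$. So the components must be joined into a \emph{single connected} spanning subgraph before applying Theorem~\ref{thm:HydraUpperBound}, not as separate components. The right framing: choose a spanning subgraph $G'\subseteq B_d$ (in fact all of $B_d$, since $B_d$ is a tree it is its own only spanning subgraph with no isolated vertices) and bound $\pcn(L(B_d))$, then Theorem~\ref{thm:HydraUpperBound} gives $\h(B_d)\le |E(B_d)|+\pcn(L(B_d))$. Hmm, but $\pcn(L(B_d))$ is large — the line graph of a complete binary tree has a path cover number that is $\Theta(2^d)$ as well, giving $\h\le |E|+\Theta(|E|)$, only the trivial $2|E|$ order. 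So Theorem~\ref{thm:HydraUpperBound} alone is too weak; indeed Theorem~\ref{thm:BinaryTreePaths} is cited precisely to say connectivity restrictions worsen the bound. Therefore the upper bound must be proved by a direct, hand-crafted hydra construction on $B_d$ that is cleverer than any path-cover argument: one designs hyperarcs so that the forward-chaining ``wave'' started from any edge sweeps up to the root and back down into every subtree, reusing bodies across a $15$-edge periodic pattern so the amortized cost per $15$ edges is $17$ hyperarcs. The main obstacle — and the part I would expect to be genuinely involved — is exactly this explicit recursive construction: specifying, for a generic internal ``block'' of the tree, which heads to attach to which edges so that (a) every edge reaches its two endpoints' whole reachable region, (b) the closure of a non-edge pair stays trivial (automatic here since $B_d$ is a tree and all bodies are edges, provided we never create a body that is a non-edge — which we must be careful about when an edge gets a ``far'' head), and (c) the count works out to $\frac{17}{15}|E|+1$; verifying the closure conditions for the boundary blocks near the root and near the leaves, where the periodic pattern breaks, is where the $+1$ slack and most of the case-checking lives.
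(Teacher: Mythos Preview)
Your lower bound is exactly the paper's argument: apply Theorem~\ref{thm:LowerBoundTrees} with $\ell(B_d)=2^{d-1}$ and check the arithmetic.

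Your upper bound, however, derails at the sentence ``in fact all of $B_d$, since $B_d$ is a tree it is its own only spanning subgraph with no isolated vertices.'' This is false, and it is precisely the point you are missing. A spanning subgraph need not be connected; it only needs to contain every vertex with degree at least one. In $B_d$ you may delete any \emph{internal} edge, since both endpoints keep other neighbors; only the leaf edges are forced. Thus $B_d$ has many disconnected spanning subgraphs with no isolated vertices, and Theorem~\ref{thm:HydraUpperBound} (whose parameter $p(G)$ minimizes over \emph{all} such subgraphs) is exactly what the paper uses. You also read Theorem~\ref{thm:BinaryTreePaths} backwards: it is cited to show that allowing disconnected $G'$ genuinely helps, since $\pcn(L(B_d))$ is linearly larger than $p(B_d)$.

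The paper's construction is short once you see this: delete every fourth level of edges in $B_d$. The resulting spanning subgraph $B'$ is a forest of copies of $B_3$, with no isolated vertices. Each $B_3$ has $14$ edges, and $L(B_3)$ is covered by two vertex-disjoint paths; together with the one deleted edge sitting above each non-root copy this accounts for $2$ paths per $15$ edges of $B_d$. Hence $p(B_d)\le\big\lceil\tfrac{2}{15}|E(B_d)|\big\rceil$, and Theorem~\ref{thm:HydraUpperBound} gives $\h(B_d)\le|E(B_d)|+\big\lceil\tfrac{2}{15}|E(B_d)|\big\rceil\le\tfrac{17}{15}|E(B_d)|+1$. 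No bespoke hydra construction is needed, and there is no per-component stitching cost to worry about: the $+\,p(G)$ term in Theorem~\ref{thm:HydraUpperBound} already absorbs the cost of linking the path-cover paths cyclically.
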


\begin{proof}
The lower bound follows from Theorem~\ref{thm:LowerBoundTrees}, since there are $2^{d-1}$ leaves in $(B_{d})^{-}$ (all vertices in the $(d-1)$ level of the tree).

\smallskip
For the upper bound, recall that $\p(T)$ is the minimum path cover number of $L(T')$ over all subgraphs $T'$ with no isolated vertices. We show that there is a spanning subgraph $B'$ of $B_{d}$ with no isolated vertices with a path cover containing at most $\left\lceil(2/15)|E(B_{d})|\right\rceil$ paths. Specifically, we delete every fourth level of edges in the binary tree, leaving a forest of $B_3$'s. This corresponds to deleting every fourth level of vertices in $L(B_{d})$, to obtain $L(B')$, and gives 2 paths in $L(B_d)$ for every 15 edges in the tree. (See Figure~\ref{fig:B4k+3}.)
%
\begin{figure}[t]
\centering
\begin{pspicture}(25,-40)(340,200)
\psset{linewidth=.9, labelsep=1}
\cnode*(85,160){2.3pt}{v0}
\cnode*(285,160){2.3pt}{v1}
\ncline{v0}{v1}
\cnode*(40,105){2.3pt}{v2}
\cnode*(140,105){2.3pt}{v3}
\ncline{v2}{v3}
\cnode*(240,105){2.3pt}{v4}
\cnode*(340,105){2.3pt}{v5}
\ncline{v4}{v5}
\ncline[linecolor=red]{v0}{v2}\ncline[linecolor=red]{v0}{v3}
\ncline[linecolor=red]{v1}{v4}\ncline[linecolor=red]{v1}{v5}
\cnode*(15,70){2.3pt}{v6}
\cnode(0,45){2.3pt}{v14}
\cnode(30,45){2.3pt}{v15}
\ncline[linestyle=dotted]{v14}{v15}\ncline[linestyle=dotted]{v6}{v14}\ncline[linestyle=dotted]{v6}{v15}
\cnode*(-5,33){2.3pt}{v30}
\cnode*(5,33){2.3pt}{v31}
\cnode*(25,33){2.3pt}{v32}
\cnode*(35,33){2.3pt}{v33}
\ncline{v30}{v31}\ncline[linestyle=dotted]{v14}{v30}\ncline[linestyle=dotted]{v14}{v31}
\pspolygon[fillcolor=gray](-9,0)(9,0)(9,37)(-9,37)
\rput(-5, -15){\psshadowbox{\psrotateleft{$L(B_{d-4})$}}}
\ncline{v32}{v33}\ncline[linestyle=dotted]{v15}{v32}\ncline[linestyle=dotted]{v15}{v33}
\pspolygon[fillcolor=gray](21,0)(39,0)(39,37)(21,37)
\rput(25, -15){\psshadowbox{\psrotateleft{$L(B_{d-4})$}}}
\cnode*(65,70){2.3pt}{v7}
\cnode(50,45){2.3pt}{v16}
\cnode(80,45){2.3pt}{v17}
\ncline[linestyle=dotted]{v16}{v17}\ncline[linestyle=dotted]{v7}{v16}\ncline[linestyle=dotted]{v7}{v17}
\ncline[linecolor=red]{v2}{v6}\ncline{v2}{v7}\ncline[linecolor=red]{v6}{v7}
\cnode*(45,33){2.3pt}{v34}
\cnode*(55,33){2.3pt}{v35}
\cnode*(75,33){2.3pt}{v36}
\cnode*(85,33){2.3pt}{v37}
\ncline{v34}{v35}\ncline[linestyle=dotted]{v16}{v34}\ncline[linestyle=dotted]{v16}{v35}
\pspolygon[fillcolor=gray](41,0)(59,0)(59,37)(41,37)
\rput(45, -15){\psshadowbox{\psrotateleft{$L(B_{d-4})$}}}
\ncline{v36}{v37}\ncline[linestyle=dotted]{v17}{v36}\ncline[linestyle=dotted]{v17}{v37}
\pspolygon[fillcolor=gray](71,0)(89,0)(89,37)(71,37)
\rput(75, -15){\psshadowbox{\psrotateleft{$L(B_{d-4})$}}}
\cnode*(115,70){2.3pt}{v8}
\cnode(100,45){2.3pt}{v18}
\cnode(130,45){2.3pt}{v19}
\ncline[linestyle=dotted]{v18}{v19}\ncline[linestyle=dotted]{v8}{v18}\ncline[linestyle=dotted]{v8}{v19}
\cnode*(95,33){2.3pt}{v38}
\cnode*(105,33){2.3pt}{v39}
\cnode*(125,33){2.3pt}{v40}
\cnode*(135,33){2.3pt}{v41}
\ncline{v38}{v39}\ncline[linestyle=dotted]{v18}{v38}\ncline[linestyle=dotted]{v18}{v39}
\pspolygon[fillcolor=gray](91,0)(109,0)(109,37)(91,37)
\rput(95, -15){\psshadowbox{\psrotateleft{$L(B_{d-4})$}}}
\ncline{v40}{v41}\ncline[linestyle=dotted]{v19}{v40}\ncline[linestyle=dotted]{v19}{v41}
\pspolygon[fillcolor=gray](121,0)(139,0)(139,37)(121,37)
\rput(125, -15){\psshadowbox{\psrotateleft{$L(B_{d-4})$}}}
\cnode*(165,70){2.3pt}{v9}
\cnode(150,45){2.3pt}{v20}
\cnode(180,45){2.3pt}{v21}
\ncline[linestyle=dotted]{v20}{v21}\ncline[linestyle=dotted]{v9}{v20}\ncline[linestyle=dotted]{v9}{v21}
\cnode*(145,33){2.3pt}{v42}
\cnode*(155,33){2.3pt}{v43}
\cnode*(175,33){2.3pt}{v44}
\cnode*(185,33){2.3pt}{v45}
\ncline{v42}{v43}\ncline[linestyle=dotted]{v20}{v42}\ncline[linestyle=dotted]{v20}{v43}
\pspolygon[fillcolor=gray](141,0)(159,0)(159,37)(141,37)
\rput(145, -15){\psshadowbox{\psrotateleft{$L(B_{d-4})$}}}
\ncline{v44}{v45}\ncline[linestyle=dotted]{v21}{v44}\ncline[linestyle=dotted]{v21}{v45}
\pspolygon[fillcolor=gray](171,0)(189,0)(189,37)(171,37)
\rput(175, -15){\psshadowbox{\psrotateleft{$L(B_{d-4})$}}}
\ncline{v3}{v8}\ncline[linecolor=red]{v3}{v9}\ncline[linecolor=red]{v8}{v9}
\cnode*(215,70){2.3pt}{v10}
\cnode(200,45){2.3pt}{v22}
\cnode(230,45){2.3pt}{v23}
\ncline[linestyle=dotted]{v22}{v23}\ncline[linestyle=dotted]{v10}{v22}\ncline[linestyle=dotted]{v10}{v23}
\cnode*(195,33){2.3pt}{v46}
\cnode*(205,33){2.3pt}{v47}
\cnode*(225,33){2.3pt}{v48}
\cnode*(235,33){2.3pt}{v49}
\ncline{v46}{v47}\ncline[linestyle=dotted]{v22}{v46}\ncline[linestyle=dotted]{v22}{v47}
\pspolygon[fillcolor=gray](191,0)(209,0)(209,37)(191,37)
\rput(195, -15){\psshadowbox{\psrotateleft{$L(B_{d-4})$}}}
\ncline{v48}{v49}\ncline[linestyle=dotted]{v23}{v48}\ncline[linestyle=dotted]{v23}{v49}
\pspolygon[fillcolor=gray](221,0)(239,0)(239,37)(221,37)
\rput(225, -15){\psshadowbox{\psrotateleft{$L(B_{d-4})$}}}
\cnode*(265,70){2.3pt}{v11}
\cnode(250,45){2.3pt}{v24}
\cnode(280,45){2.3pt}{v25}
\ncline[linestyle=dotted]{v24}{v25}\ncline[linestyle=dotted]{v11}{v24}\ncline[linestyle=dotted]{v11}{v25}
\cnode*(245,33){2.3pt}{v50}
\cnode*(255,33){2.3pt}{v51}
\cnode*(275,33){2.3pt}{v52}
\cnode*(285,33){2.3pt}{v53}
\ncline{v50}{v51}\ncline[linestyle=dotted]{v24}{v50}\ncline[linestyle=dotted]{v24}{v51}
\pspolygon[fillcolor=gray](241,0)(259,0)(259,37)(241,37)
\rput(245, -15){\psshadowbox{\psrotateleft{$L(B_{d-4})$}}}
\ncline{v52}{v53}\ncline[linestyle=dotted]{v25}{v52}\ncline[linestyle=dotted]{v25}{v53}
\pspolygon[fillcolor=gray](271,0)(289,0)(289,37)(271,37)
\rput(275, -15){\psshadowbox{\psrotateleft{$L(B_{d-4})$}}}
\ncline[linecolor=red]{v4}{v10}\ncline{v4}{v11}\ncline[linecolor=red]{v10}{v11}
\cnode*(315,70){2.3pt}{v12}
\cnode(300,45){2.3pt}{v26}
\cnode(330,45){2.3pt}{v27}
\ncline[linestyle=dotted]{v26}{v27}\ncline[linestyle=dotted]{v12}{v26}\ncline[linestyle=dotted]{v12}{v27}
\cnode*(295,33){2.3pt}{v54}
\cnode*(305,33){2.3pt}{v55}
\cnode*(325,33){2.3pt}{v56}
\cnode*(335,33){2.3pt}{v57}
\ncline{v54}{v55}\ncline[linestyle=dotted]{v26}{v54}\ncline[linestyle=dotted]{v26}{v55}
\pspolygon[fillcolor=gray](291,0)(309,0)(309,37)(291,37)
\rput(295, -15){\psshadowbox{\psrotateleft{$L(B_{d-4})$}}}
\ncline{v56}{v57}\ncline[linestyle=dotted]{v27}{v56}\ncline[linestyle=dotted]{v27}{v57}
\pspolygon[fillcolor=gray](321,0)(339,0)(339,37)(321,37)
\rput(325, -15){\psshadowbox{\psrotateleft{$L(B_{d-4})$}}}
\cnode*(365,70){2.3pt}{v13}
\cnode(350,45){2.3pt}{v28}
\cnode(380,45){2.3pt}{v29}
\ncline[linestyle=dotted]{v28}{v29}\ncline[linestyle=dotted]{v13}{v29}\ncline[linestyle=dotted]{v13}{v28}
\cnode*(345,33){2.3pt}{v58}
\cnode*(355,33){2.3pt}{v59}
\cnode*(375,33){2.3pt}{v60}
\cnode*(385,33){2.3pt}{v61}
\ncline{v58}{v59}\ncline[linestyle=dotted]{v28}{v58}\ncline[linestyle=dotted]{v28}{v59}
\pspolygon[fillcolor=gray](341,0)(359,0)(359,37)(341,37)
\rput(355, -15){\psshadowbox{\psrotateleft{$L(B_{d-4})$}}}
\ncline{v60}{v61}\ncline[linestyle=dotted]{v29}{v60}\ncline[linestyle=dotted]{v29}{v61}
\pspolygon[fillcolor=gray](371,0)(389,0)(389,37)(371,37)
\rput(375, -15){\psshadowbox{\psrotateleft{$L(B_{d-4})$}}}
\ncline{v5}{v12}\ncline[linecolor=red]{v5}{v13}\ncline[linecolor=red]{v12}{v13}
\end{pspicture}
\caption{Induction on the line graph $L(B_{d})$ from the proof of Theorem~\ref{thm:BinaryTreeBounds}.}
\label{fig:B4k+3}
\end{figure}
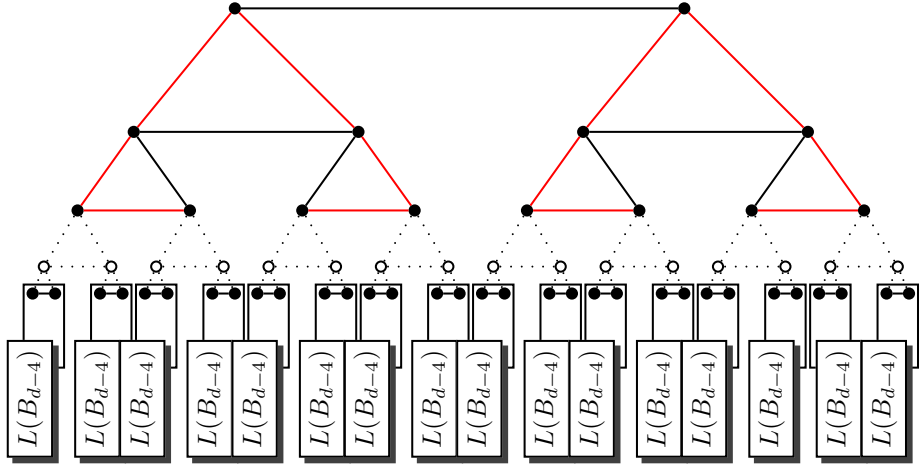
\end{proof}

Recall that $\pcn(G)$ is the path cover number of graph $G$, and let {$\g(G)=\pcn(L(G))$}. The following theorem demonstrates the significance of considering non-connected subgraphs of $G$ when upper-bounding $\h(G)$ in Theorem~\ref{thm:HydraUpperBound}. It demonstrates that even for trees, there are examples where the quantity $\g(G)$ is significantly larger than the quantity $\p(G)$.

\begin{theorem}\label{thm:BinaryTreePaths}
For $d$ large,
 \[\g(B_d)-\p(B_d)=\Theta(|V(B_d)|).\]
\end{theorem}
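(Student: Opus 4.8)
The plan is to prove the two bounds $\g(B_d) = \Theta(|V(B_d)|)$ and $\p(B_d) = O(1)$ separately; together with $\p(B_d) \ge 1$ and $\g(B_d) \ge \p(B_d)$ these give the claimed $\Theta(|V(B_d)|)$ gap. The upper bound on $\p(B_d)$ is the easier half and is essentially already contained in the proof of Theorem~\ref{thm:BinaryTreeBounds}: deleting every fourth level of edges of $B_d$ leaves a spanning forest whose components are copies of $B_3$ (together with a few small leftover pieces near the root and near the leaves, depending on $d \bmod 4$), and each $B_3$ has a line graph with bounded path cover number (in fact $\pcn(L(B_3))$ is some small constant $c_0$). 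Since there are $\Theta(|V(B_d)|)$ such components, this only shows $\p(B_d) = O(|V(B_d)|)$, which is not good enough. So first I would instead delete edges much more sparsely: delete only the edges on a single, carefully chosen level (say level $\lceil d/2\rceil$), leaving a spanning subgraph consisting of one copy of $B_{\lceil d/2\rceil - 1}$ at the top and $2^{\lceil d/2\rceil}$ copies of a smaller complete binary tree hanging below — but that still gives polynomially many components. The right move is to keep $B_d$ connected minus one judiciously chosen set of pendant-adjacent edges so that $L(G')$ has a long path; concretely, I would exhibit a spanning tree-like subgraph $G'$ of $B_d$ (not necessarily a tree) whose line graph has a Hamiltonian path, forcing $\p(B_d)=1$. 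A natural candidate: take $B_d$ itself and observe that $L(B_d)$ has path cover number $\Theta(2^d)$ because of its many "claws," but a spanning subgraph obtained by deleting, at each internal node of depth $<d-1$, exactly one of its two downward edges, turns $B_d$ into a caterpillar-like spanning subgraph (a union of root-to-leaf paths glued along a backbone) whose line graph is traceable. I would verify that such a subgraph exists with no isolated vertices — this requires reattaching the deleted pendant edges as in Proposition~\ref{prop:SpanningSubgraph} reasoning — and conclude $\p(B_d) = O(1)$.

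For the lower bound on $\g(B_d) = \pcn(L(B_d))$, the idea is a counting/charging argument on the claws of $B_d$. Every internal vertex $v$ of $B_d$ at depth between $1$ and $d-1$ together with its parent edge and its two child edges forms a claw $K_{1,3}$ in $L(B_d)$ (the three edges of $B_d$ incident to $v$ are pairwise adjacent in $L(B_d)$ only if... — actually they are pairwise adjacent, forming a triangle plus the parent-of-$v$ structure; I must be careful about exactly which induced subgraph arises). The key structural fact I would extract is that in $L(B_d)$ there are $\Theta(2^d)$ vertices of degree $\ge 3$ arranged so that any single path in a path cover can "absorb" only a bounded number of them; more precisely, I would show that a set of $\Theta(2^d)$ edges of $B_d$ (e.g. the left child edges of the internal nodes at depth $d-2$, say) are pairwise "far apart" and each forces a distinct path endpoint. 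The clean way to do this is via the line-graph path-cover / Hamiltonian-completion correspondence used throughout the paper (the connection to $\pcn(L(T))$ for trees via caterpillar decompositions from~\cite{Raychaudhuri1995}): the path cover number of $L(T)$ for a tree $T$ equals (a known formula in terms of) the number of vertices of degree $\ge 3$ in $T$, and $B_d$ has $2^{d-1}-1$ vertices of degree $3$, giving $\pcn(L(B_d)) = \Theta(2^d) = \Theta(|V(B_d)|)$.

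The main obstacle, and where I would spend the most care, is making the upper bound $\p(B_d)=O(1)$ rigorous: I need to actually construct a spanning subgraph $G'$ of $B_d$ with no isolated vertices such that $\pcn(L(G'))$ is bounded, and the natural constructions tend to leave either isolated vertices or $\Theta(2^d)$ small components. The resolution I anticipate is to combine two moves: (1) pick a spanning subgraph $G'$ that is a single "thick path" — a caterpillar-like subgraph threading through all $2^{d+1}-1$ vertices by using at each depth-$j$ internal node one child-edge to descend and routing so that the undescended subtree is attached as a pendant structure — and (2) invoke that the line graph of a caterpillar has a Hamiltonian path (as used in the proof of Theorem~\ref{thm:caterpillars}), so $\pcn(L(G'))=1$, whence $\p(B_d)=1$. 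Verifying that such a caterpillar spanning subgraph of $B_d$ exists (every internal node of $B_d$ has at most two children, so after choosing a descent edge the remaining child hangs a whole subtree, which is not a single pendant edge) is the delicate point; one fixes it by only requiring $G'$ to be a spanning subgraph with no isolated vertices whose line graph has bounded path cover, and building $G'$ level-by-level from the $B_3$-decomposition but then linking consecutive $B_3$-blocks by retaining one connecting edge between levels, so that $L(G')$ becomes a bounded number of long paths rather than $\Theta(2^d)$ short ones. Once that construction is pinned down the theorem follows immediately by subtracting the two estimates.
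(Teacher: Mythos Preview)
Your plan has a fatal gap: the claim $\p(B_d) = O(1)$ is false; in fact $\p(B_d) = \Theta(|V(B_d)|)$. This follows immediately from Theorem~\ref{thm:HydraUpperBound} together with the lower bound in Theorem~\ref{thm:BinaryTreeBounds}: they give $|E(B_d)| + \p(B_d) \ge \h(B_d) \ge (9/8)|E(B_d)|$, so $\p(B_d) \ge |E(B_d)|/8$. Your attempted constructions all collapse for a structural reason you partly sensed. Since $B_d$ is itself a tree, its only \emph{connected} spanning subgraph is $B_d$, which is not a caterpillar for $d \ge 3$, so no spanning caterpillar exists. Any proper spanning subgraph $G' \subsetneq B_d$ with no isolated vertices is a forest whose line graph is disconnected, and $\pcn(L(G'))$ is at least the number of components; ``linking consecutive $B_3$-blocks by retaining one connecting edge'' just pushes $G'$ back toward $B_d$ and re-introduces the branching that made $\pcn(L(B_d))$ large in the first place.

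The paper's argument works by pinning down the constants, not by making $\p(B_d)$ small in absolute terms. It computes $\g(B_d) = \pcn(L(B_d))$ exactly as $\lceil |E(B_d)|/7 \rceil$, using the Kratzke--West/Raychaudhuri equivalence between line-graph path covers, trail covers, and caterpillar edge-covers of a tree (an optimal caterpillar cover of $B_d$ uses one caterpillar per seven edges). This is paired with the bound $\p(B_d) \le \lceil (2/15)|E(B_d)| \rceil$ already established in the proof of Theorem~\ref{thm:BinaryTreeBounds}. Since $1/7 - 2/15 = 1/105 > 0$, the difference $\g(B_d) - \p(B_d)$ is at least a positive constant times $|E(B_d)|$, and it is trivially at most $\g(B_d)$; both are $\Theta(|V(B_d)|)$. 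So the theorem is really about the gap between the two constants $1/7$ and $2/15$, and your lower-bound sketch for $\g(B_d)$, while in the right ballpark, would not by itself suffice without a precise enough constant to beat $2/15$.
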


\begin{proof}
Applying the algorithm of  \cite{KratzkeWe1996} the path cover number of $L(B_d)$ is $\g(B_d)=\left\lceil\left|E\left(B_{d}\right)\right|/7 \right\rceil$. This is because this algorithm calculates the trail cover number of a tree, which by~\cite[Theorems~3 \& 6]{Raychaudhuri1995} is equivalent to the minimum number of caterpillars required to cover the edges of a tree, and the path cover number of the line graph. One can construct an optimal caterpillar edge cover from the bottom of the tree, and from left to right, giving one caterpillar, and equivalently one path in the path cover of $L(B_d)$, for every 7 edges. \end{proof}

It would be interesting to close the gap between the bounds in Theorem~\ref{thm:BinaryTreeBounds}. The lower bound takes only the bottom part of the tree into consideration and thus it is not expected to be sharp. It is an open question whether the upper bound $|E(T)| + p(T)$ is sharp for trees in general.

\section{A related minimization problem}\label{sect:CompleteGraphs}

There are several ways in which one can generalize hydra numbers. One can consider hydra numbers for hypergraphs, where the bodies of the hyperarcs contain more than two vertices and so they form a hypergraph instead of a graph. Studying hydra numbers of hypergraphs is an open area of research. In this section we consider a second generalization of hydra numbers.

Given $n$ and a number $k$ ($2\leq k\leq n-1$), let $f(n, k)$ be the minimal number of hyperarcs in an $n$-vertex 3-uniform directed hypergraph $H$ such that the closure of every $k$-element subset of the vertices is the whole vertex set. The case $k = 2$ is just the hydra number of complete graphs and so $f(n,2)={n\choose 2}$.

We use Tur\'an's theorem from extremal graph theory (see, e.g. \cite{West2001}).
The \emph{Tur\'an graph} $T(n, k - 1)$ is formed by dividing $n$ vertices into
$k - 1$ parts as evenly as possible (i.e., into parts of size $\lfloor n/(k - 1) \rfloor$
and $\lceil n/(k - 1) \rceil$) and connecting two vertices iff they are in different
parts. The number of edges of $T(n, k - 1)$ is denoted by $t(n, k - 1)$. If $k - 1$
divides $n$ then
$$t(n, k - 1) = \left(1-\frac{1}{k-1}\right) \frac{n^2}{2}.$$
Tur\'an's theorem states that if an $n$-vertex graph contains no $k$-clique
then it has at most $t(n, k - 1)$ edges and the only extremal graph is
$T(n, k - 1)$. Switching to complements it follows that if an $n$-vertex graph has
no empty subgraph on $k$ vertices then it has at least ${n \choose 2} - t(n, k - 1)$
edges.

\begin{theorem}\label{thm:BeyondHydras} If $k \le (n/2) + 1$ then
\[ {n \choose 2} - t(n, k - 1) \le f(n, k) \le {n \choose 2} - t(n, k - 1) + (k - 1). \]
\end{theorem}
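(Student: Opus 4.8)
The plan is to squeeze $f(n,k)$ between the two stated quantities by relating it to the hydra number of $G^\ast := \overline{T(n,k-1)}$, the disjoint union of the $k-1$ Tur\'an parts with each part made into a clique. For the lower bound, let $H=(V,F)$ be any $3$-uniform directed hypergraph in which the closure of every $k$-element set of vertices is $V$, and let $G$ be its \emph{body graph}: $\{u,v\}\in E(G)$ exactly when some hyperarc of $H$ has body $u,v$. First I would argue that $G$ has no independent set of size $k$: if $S$ with $|S|=k$ were independent in $G$, then in forward chaining started from $S$ the first hyperarc that could fire would need both its body vertices already marked, hence both in $S$, forcing an edge of $G$ inside $S$; so no hyperarc ever fires and $cl_H(S)=S\neq V$ (since $k\le n-1$), a contradiction. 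Hence $\overline{G}$ contains no $k$-clique, so Tur\'an's theorem (in the complement form quoted above) gives $|E(G)|\ge {n\choose 2}-t(n,k-1)$. Since assigning to each hyperarc its body is a surjection from $F$ onto $E(G)$, we get $|F|\ge |E(G)|\ge {n\choose 2}-t(n,k-1)$, and minimizing over $H$ yields the lower bound.

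For the upper bound, the hypothesis $k\le n/2+1$ enters: it is equivalent to $n/(k-1)\ge 2$, so every part of $T(n,k-1)$ has at least two vertices, hence $G^\ast$ has no isolated vertex and $|E(G^\ast)|={n\choose 2}-t(n,k-1)$. I would first observe that any directed hypergraph $H$ that \emph{represents} $G^\ast$ in the sense of the hydra definition already has the property defining $f(n,k)$: given a $k$-subset $S$, by pigeonhole two of its vertices, say $u$ and $v$, lie in a common part and hence satisfy $\{u,v\}\in E(G^\ast)$, so $cl_H(S)\supseteq cl_H(u,v)=V$. Thus $f(n,k)\le \h(G^\ast)$.

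It then remains to bound $\h(G^\ast)$. Applying Theorem~\ref{thm:HydraUpperBound} with the spanning subgraph $G^\ast$ of itself, $\h(G^\ast)\le |E(G^\ast)|+\p(G^\ast)\le |E(G^\ast)|+\pcn(L(G^\ast))$. Now $L(G^\ast)$ is the disjoint union of the line graphs of the $k-1$ cliques; a clique on $m\ge 3$ vertices is Hamiltonian, so its line graph is Hamiltonian and in particular has a Hamiltonian path, while a clique on $2$ vertices has a one-vertex line graph. Hence each of the $k-1$ components of $L(G^\ast)$ admits a Hamiltonian path, so $\pcn(L(G^\ast))=k-1$ and $\h(G^\ast)\le \bigl({n\choose 2}-t(n,k-1)\bigr)+(k-1)$. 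Combined with $f(n,k)\le \h(G^\ast)$, this gives the upper bound.

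I do not expect a real obstacle: the only delicate points are the ``forward chaining cannot start inside an independent set'' step in the lower bound, the pigeonhole reduction from $k$-subsets of $V$ to edges of $G^\ast$, and the verification that the Tur\'an parts have size at least two (the single place the hypothesis $k\le n/2+1$ is used); everything else is bookkeeping with Tur\'an's theorem and Theorem~\ref{thm:HydraUpperBound}.
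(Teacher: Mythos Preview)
Your argument is correct and follows the same idea as the paper: the lower bound via Tur\'an's theorem applied to the body graph is identical, and your upper bound builds the same hypergraph on $\overline{T(n,k-1)}$. The only difference is packaging: the paper writes out the construction on the $k-1$ cliques by hand (Hamiltonian path in each clique, two extra heads at the end of each to reach the next), whereas you factor the upper bound as $f(n,k)\le \h(G^\ast)$ via pigeonhole and then invoke Theorem~\ref{thm:HydraUpperBound} with $\pcn(L(G^\ast))=k-1$. Your route is a bit more modular and makes the conceptual step $f(n,k)\le \h(\overline{T(n,k-1)})$ explicit, but the underlying construction and counts coincide.
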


\begin{proof}
Suppose $H$ is a 3-uniform directed hypergraph with all $k$-tuples having the whole set as their closure. Then every $k$-element set $S$ of vertices must contain at least one body of a hyperarc in $H$, otherwise forward chaining started from $S$ cannot mark any vertices. Thus the undirected graph formed by the bodies in $H$ contains no empty subgraph on $k$ vertices, and the lower bound follows by Tur\'an's theorem.

For the upper bound we construct a directed hypergraph based on the complement of $T(n, k - 1)$ over the vertex set $\{x_1, \ldots, x_n\}$, consisting of $k - 1$ cliques of size differing by at most 1. Assume that each clique has size at least 3. In each clique do the following. Pick a Hamiltonian path, direct it, and introduce hyperarcs as in~(\ref{eq:cycfor}) (with the exception of the last edge closing the cycle).
For every other edge $(u,v)$, introduce a hyperarc $\cl{u}{v}{w}$ where $w$ is a vertex on the Hamiltonian path that is adjacent to $u$ or $v$. For each edge $e$ closing a Hamiltonian cycle, add \emph{two} hyperarcs with body $e$, and heads the endpoints of the first edge on the Hamiltonian path of the next clique (where `next' assumes an arbitrary cyclic ordering of the cliques). For cliques of size 2 the single edge in the clique plays the role of the unassigned edge and the construction is similar.
\end{proof}

\section{Acknowledgements}
This material is based upon work supported by the National Science Foundation under Grant No. CCF-0916708. The second author acknowledges funding by the People Programme (Marie Curie Actions) of the European UnionÕs Seventh Framework Programme FP7/2007-2013/ under REA grant agreement n$^{\mathrm{o}}$ [316808].

\bibliography{hydras}
\bibliographystyle{plain}

\end{document}